\documentclass[letterpaper, 10 pt, conference]{ieeeconf}  

\IEEEoverridecommandlockouts                              
\usepackage{amsmath,amssymb,amsfonts}
\usepackage{graphicx}
\usepackage{textcomp}

\usepackage{graphicx}
\usepackage{epsfig} 
\usepackage{cite}

\usepackage{mathtools}
\usepackage{tikz}
\usepackage{fancyhdr}

\let\labelindent\relax
\usepackage{enumitem}

\fancypagestyle{ieeecopyright}{
  \fancyhf{}

  \fancyfoot[C]{
    \parbox{\textwidth}{\scriptsize
      ©2026 IEEE.  Personal use of this material is permitted.  Permission from IEEE must be obtained for all other uses, in any current or future media, including reprinting/republishing this material for advertising or promotional purposes, creating new collective works, for resale or redistribution to servers or lists, or reuse of any copyrighted component of this work in other works.%
    }%
  }
}

\newtheorem{theorem}{Theorem}

\newtheorem{proposition}{Proposition}

\newtheorem{definition}{Definition}
\newtheorem{remark}{Remark}
\newtheorem{assumption}{Assumption}

\newcommand{\bbR}{\mathbb{R}}
\newcommand{\bbN}{\mathbb{N}}

\newcommand{\bbC}{\mathbb{C}}

\newcommand{\calD}{\mathcal{D}}
\newcommand{\calA}{\mathcal{A}}

\newcommand{\calR}{\mathcal{R}}

\newcommand{\calL}{\mathcal{L}}

\newcommand{\calN}{\mathcal{N}}
\newcommand{\calG}{\mathcal{G}}
\newcommand{\calE}{\mathcal{E}}
\newcommand{\calM}{\mathcal{M}}
\newcommand{\calW}{\mathcal{W}}
\newcommand{\calP}{\mathcal{P}}
\newcommand{\calV}{\mathcal{V}}

\newcommand{\sfx}{\mathsf{x}}
\newcommand{\sfz}{\mathsf{z}}
\newcommand{\sfo}{\mathsf{o}}
\newcommand{\sfA}{\mathsf{A}}
\newcommand{\sfM}{\mathsf{M}}

\newcommand{\unobs}{{\bar{\mathcal O}}}

\DeclareMathOperator*{\diag}{diag}
\DeclareMathOperator*{\spn}{span}
\DeclareMathOperator*{\col}{col}
\DeclareMathOperator*{\im}{im}

\title{\LARGE \bf
On  Stability of a Distributed Observer Over Switching Networks:\\
The Case of Commutative Coupling
}

\author{Sunghyun Koo, Jin Gyu Lee, and Hyungbo Shim
\thanks{
This work was supported by the National Research Foundation of Korea (NRF) grant funded by the Korea government (MSIT) (RS-2026-25504174).\newline\indent
The authors are with ASRI, Department of Electrical and Computer Engineering, Seoul National University, Seoul 08826, South Korea (e-mail: {\tt\small shkoo@cdsl.kr}; {\tt\small jingyu.lee@snu.ac.kr}; {\tt\small hshim@snu.ac.kr}). Corresponding author: Jin Gyu Lee.}%
}

\begin{document}

\maketitle
\thispagestyle{ieeecopyright}
\pagestyle{empty}

\begin{abstract}

Distributed state estimation algorithms have been extended to switching networks, which model unreliable communication networks.
Yet existing results apply only to specific classes of systems, such as neutrally stable systems.
In this paper, we analyze a distributed observer under the assumption that the system matrix commutes with the coupling matrices.
This not only enables a refined analysis that leads to a necessary and sufficient condition for stability over switching networks, but also covers a broader class of systems than existing results, including unstable systems such as vehicular platoons.

\textit{Index Terms}---Distributed observer, distributed state estimation, multi-agent systems, switching networks.
\end{abstract}


\setlength\textfloatsep{4mm}
\setlength\floatsep{4mm}

\section{Introduction}

The distributed state estimation problem, which aims to estimate the state of a system using a multi-agent sensor network, has been extensively studied due to its wide range of applications \cite{survey}. This paper focuses on estimating the state of a continuous-time linear time-invariant system
\begin{equation}
\label{eqn:system}
    \begin{aligned}
    \dot x(t) &= Ax(t) \in\bbR^n\\
    y_i(t) &= C_i x(t) \in\bbR^{m_i}, \quad i\in\calN\coloneqq \{1,\dots,N\},
\end{aligned}
\end{equation}
where the output $y_i$ is the measurement of agent $i$, which may not suffice to reconstruct the state $x$ of the system (i.e., each pair $(C_i,A)$ may not be detectable).
In this case, to correctly estimate the full state $x$, the agents must communicate with other agents and appropriately incorporate the communicated information.
Various such algorithms, called \emph{distributed observers}, have been proposed \cite{park2017tac,mitra2018tac,kim2020TAC}.

In practice, the communication topology among agents may vary over time \cite{olfati2007ieee}.
One source of such topology changes is the unreliability of communication: communication links may intermittently fail because of packet losses or environmental effects, such as shadowing in wireless communication.
Another source arises when plug-and-play operation is allowed: as agents join and/or leave the network, communication links are established or removed.

Recent research efforts in distributed observer design have been successful in addressing these issues \cite{mitra2022TAC,xu2022tcyber,xu2025tac,zhao2025ecc,wang_split,wang2024auto,caiazzo2025ifac,yang2023auto,liu_continuous,zhang2024TAC,basu2024CDC}.
Yet, the problem remains open.
The algorithms proposed in \cite{mitra2022TAC}, \cite[Sec.~4]{xu2022tcyber}, \cite{xu2025tac} effectively operate over time-varying networks, but require a centralized design procedure and thus do not support plug-and-play operation.
The algorithms considered in \cite{zhao2025ecc,wang_split,wang2024auto,caiazzo2025ifac} allow both decentralized design and operation over switching networks, but require the network to relay all previously communicated information \cite{zhao2025ecc}, remain strongly connected despite switching \cite{wang_split}, or switch sufficiently fast\footnote{As discussed in \cite{xu2022tcyber}, fast switching may not be achievable in practice if the source of switching is unreliability of communication channels.} \cite{wang2024auto,caiazzo2025ifac}, thereby necessitating fairly reliable network conditions.
The algorithms introduced in \cite[Sec.~3]{xu2022tcyber}, \cite{yang2023auto,liu_continuous,zhang2024TAC,basu2024CDC} allow decentralized design while being well-suited for operation over unreliable networks.
However, they apply only to specific classes of systems: \cite[Sec.~3]{xu2022tcyber}, \cite{yang2023auto} require each pair $(A^\top,C_i^\top)$ to be marginally stabilizable, and \cite{liu_continuous,zhang2024TAC,basu2024CDC} require system \eqref{eqn:system} to be neutrally stable.\footnote{We call system \eqref{eqn:system} \emph{neutrally stable} if all the eigenvalues of $A$ lie on the imaginary axis and their algebraic and geometric multiplicities coincide.}

In line with \cite{liu_continuous,zhang2024TAC,basu2024CDC}, this paper also aims to extend the distributed observer proposed in \cite{kim2020TAC}, which supports decentralized design, to unreliable switching networks.
A key observation is that some practically relevant systems that are not neutrally stable, such as vehicular platoons \cite[Sec.~4.4]{wang2024auto} and battery packs \cite{khalil26auto}, still possess the ``commutative coupling structure" (see Section~\ref{sec:commutative_coupling_assumption}).
We analyze the distributed observer under this structure.
In turn, we establish a sharp necessary and sufficient condition for its stability over switching networks, from which several readily verifiable sufficient conditions follow.
The resulting conditions ensure exponential stability of the considered distributed observer over unreliable networks for a broader class of systems, encompassing the results of \cite{liu_continuous,zhang2024TAC,basu2024CDC} as special cases.

The rest of the paper is organized as follows. In Section~\ref{sec:preliminaries}, we present the graph-theoretic description of networks and define the notions of stability used in this paper.
In Section~\ref{sec:problem_setup}, we introduce the commutative coupling structure after a brief review of the distributed observer design proposed in \cite{kim2020TAC}.
The main theorems and their proofs are presented in Section~\ref{sec:stability_analysis}.
Section~\ref{sec:illustrative_example} illustrates the significance of the established conditions through an example, and the paper concludes in Section~\ref{sec:conclusion}.

\textit{Notation:}
For matrices or vectors $z_1,\dots,z_N$, we denote $\col(z_1,\dots,z_N) \coloneqq [z_1^\top,\dots,z_N^\top]^\top$ and denote by $\diag(z_1,\dots,z_N)$ the
block diagonal matrix with $z_1,\dots,z_N$ on the diagonal. The symbol $\otimes$ denotes the Kronecker product of matrices. The symbol $\oplus$ denotes the direct sum of subspaces.
The kernel and the image of a matrix are denoted by $\ker(\cdot)$ and $\im(\cdot)$, respectively.
The identity matrix of size $n\times n$ is denoted by $I_n$, and the zero matrix is denoted by $O$. For $p\in[1,\infty]$, $\|\cdot\|_p$ denotes the $p$-norm of a vector or the induced $p$-norm of a matrix.

\section{Preliminaries}
\label{sec:preliminaries}

\subsection{Description of Communication Network}

A communication topology can be represented as a directed graph.
Define the edge set $\calE\subseteq\calN\times \calN$ so that $(j,i)\in\calE$ if and only if agent $i$ can receive information from agent $j$.
Also, define the (weighted) adjacency matrix $\calA=[a^{ij}]\in\bbR^{N\times N}$ so that $a^{ij}$ represents the coupling strength of edge $(j,i)$.
Indeed, $a^{ij}\geq0$ and $a^{ij}>0\Leftrightarrow(j,i)\in\calE$.
A communication topology can then be characterized by the directed graph $\calG\coloneq(\calN,\calE,\calA)$.

To capture the dynamic aspect of communication, we assume that the communication network has a switching topology.
Let $\{\calG_p=(\calN,\calE_p,\calA_p):p\in\calP\}$ be a finite collection of all possible topologies of the communication network.
Then, a switching topology can be characterized by a switching directed graph $\calG_{\sigma(t)}$, where $\sigma:[0,\infty)\rightarrow\calP$ is a right-continuous piecewise constant switching signal.

\subsection{Stability Definitions}

We take the following stability definitions.
\begin{definition}
A dynamical system
\begin{equation}
\label{eqn:dynamical_system}
    \dot \sfz = f(t,\sfz) \in \bbR^{n_z}
\end{equation}
is called \emph{exponentially stable} if there exist a constant $r>0$ and a polynomial $p$ such that
\begin{equation}
\label{eqn:exponential_stability}
    \|\sfz(t)\|_2 \leq p(t-s) e^{-r(t-s)} \|\sfz(s)\|_2,\quad \forall t\geq s,
\end{equation}
for any initial condition $\sfz(s)\in\bbR^{n_z}$ and initial time $s$.
For $\calM\subseteq \bbR^{n_z}$ that is invariant under \eqref{eqn:dynamical_system}, system \eqref{eqn:dynamical_system} is said to be \emph{exponentially stable on $\calM$} if \eqref{eqn:exponential_stability} holds for any initial condition $\sfz(s)\in\calM$ and initial time $s$.
We refer to the constant $r$ as a \emph{guaranteed decay rate}.\footnote{We take no notice of polynomial growth when quantifying decay rates.}\hfill$\square$
\end{definition}

\section{Problem Setup}
\label{sec:problem_setup}

\subsection{Considered Distributed Observer Model}
\label{sec:considered_distributed_observer_model}

In this paper, we study the distributed observer proposed in \cite{kim2020TAC}.
Here, we briefly review its design.
Let $\hat x_i\in\bbR^n$ denote the state estimate maintained by agent $i$.
The algorithm for agent $i$ is given by
\begin{equation}
\label{eqn:distributed_observer}
    \dot{\hat x}_i = A\hat x_i - L_i (C_i \hat x_i - y_i) + \gamma_i M_i \sum_{j\in\calN} a^{ij}_{\sigma(t)}(\hat x_j - \hat x_i),
\end{equation}
where $L_i\in\bbR^{n\times m_i}$ is a local observer gain, $\gamma_i>0$ is a coupling gain, $M_i\in\bbR^{n\times n}$ is a coupling matrix, and $a^{ij}_{\sigma(t)}$ is the $ij$-th entry of the adjacency matrix $\calA_{\sigma(t)}$ 
corresponding to a switching network $\calG_{\sigma(t)}$.

The local observer gain $L_i$ and the coupling matrix $M_i$ can be synthesized in a decentralized manner as follows.
Let $\unobs_i$ denote the unobservable subspace of the pair $(C_i,A)$ and $n_i\coloneqq \dim \unobs_i$. One can construct matrices $T_{iu}\in\bbR^{n\times n_i}$ and $T_{io}\in\bbR^{n\times(n-n_i)}$ so that the columns of $T_{iu}$ and $T_{io}$ form orthonormal bases for $\unobs_i$ and $\unobs_i^\perp$, respectively.
This yields an observability decomposition
\begin{equation}
\label{eqn:observability_decomposition}
\begin{aligned}
    \begin{bmatrix}
        T_{io}^\top \\ T_{iu}^\top
    \end{bmatrix} A \begin{bmatrix}
        T_{io} & \kern-3mm T_{iu}
    \end{bmatrix} = \begin{bmatrix}
        A_{io} & \kern-3mm O \\ A_{ir} & \kern-3mm A_{iu}
    \end{bmatrix},~ C_i\begin{bmatrix}
        T_{io} & \kern-3mm T_{iu}
    \end{bmatrix} = \begin{bmatrix}
        C_{io} & \kern-3mm O
    \end{bmatrix},
\end{aligned}
\end{equation}
where the pair $(C_{io},A_{io})$ is observable.
The local observer gain $L_i$ is then defined as $L_i \coloneqq T_{io} L_{io}$, where $L_{io}\in\bbR^{(n-n_i)\times m_i}$ is chosen so that $A_{io}-L_{io} C_{io}$ is Hurwitz.
The coupling matrix is chosen as the orthogonal projection matrix onto $\unobs_i$, namely, $M_i \coloneqq T_{iu} T_{iu}^\top$.

\subsection{The Objective and Preliminary Observations}

Defining the error variable as $\tilde x_i \coloneqq \hat x_i - x$, one can verify that the error dynamics is given by
\begin{equation}
\label{eqn:error_dynamics}
    \dot{\tilde x}_i = (A-L_iC_i)\tilde x_i + \gamma_i M_i \sum_{j\in\calN} a^{ij}_{\sigma(t)} (\tilde x_j -\tilde x_i), ~ \forall i \in \calN.
\end{equation}
By stacking the error variables as $\tilde x \coloneqq \col(\tilde x_1,\dots,\tilde x_N)$, it can be compactly written as
\begin{equation}
\label{eqn:error_dynamics_stacked}
    \dot{\tilde x} = ( I_N \otimes A - L C) \tilde x - \Gamma M (\calL_{\sigma(t)}\otimes I_n) \tilde x,
\end{equation}
where $L\coloneqq \diag(L_1,\dots, L_N)$, $C\coloneqq \diag(C_1,\dots,C_N)$, $\Gamma\coloneqq \diag(\gamma_1I_n,\dots,\gamma_NI_n)$, $M\coloneqq\diag(M_1,\dots,M_N)$, and $\calL_{\sigma(t)}$ is the graph Laplacian of $\calG_{\sigma(t)}$.\footnote{The graph Laplacian of a directed graph $\calG$ is defined by $\calL\coloneqq \calD-\calA$, where $\calD$ is the diagonal matrix that makes each row sum of $\calL$ zero.}
The objective of this paper is to establish conditions that guarantee the stability of \eqref{eqn:error_dynamics_stacked} so that the distributed observer \eqref{eqn:distributed_observer} can be applied to a broader class of systems over unreliable networks.

We now present some preliminary observations on the error dynamics, which have appeared in the literature in various equivalent forms.
The first observation is the convergence of $(I-M_i) \tilde x_i$, which can be interpreted as the estimation error of the ``observable part."
Indeed, by pre-multiplying $T_{io}^\top$ to \eqref{eqn:error_dynamics}, we obtain $T_{io}^\top \dot{\tilde x}_i = (A_{io} - L_{io} C_{io}) T_{io}^\top \tilde x_i$. (Note from \eqref{eqn:observability_decomposition} that $T_{io}^\top A = A_{io} T_{io}^\top$ and $C_i = C_{io} T_{io}^\top$.)
Thus, we have
\begin{equation}
\label{eqn:observable_error}
    T_{io}^\top \tilde x_i(t) = e^{(A_{io} - L_{io} C_{io})(t-s)} T_{io}^\top \tilde x_i(s)
\end{equation}
for any $t\geq s$ and any initial condition $\tilde x_i(s)\in\bbR^n$.
Due to the design of $L_{io}$, we can conclude that $T_{io}^\top \tilde x_i$ converges exponentially to zero, and so does $(I-M_i)\tilde x_i=T_{io}T_{io}^\top \tilde x_i$.

In view of \eqref{eqn:error_dynamics_stacked}, this observation implies that $\tilde x$ exponentially gets close to the subspace
\begin{equation*}
    \calW \coloneqq \unobs_1\times \cdots \times \unobs_N.
\end{equation*}
Thus, it may suffice to analyze the stability of the error dynamics only on $\calW$.
Proposition~\ref{prop:preliminary_observation} justifies this intuition.

\begin{proposition}
\label{prop:preliminary_observation}
    The following statements hold:
    \begin{enumerate}
        \item The subspace $\calW$ is $(I_N\otimes A-LC)$- and $-\Gamma M(\calL_p\otimes I_n)$-invariant for each $p\in\calP$, and thus invariant under \eqref{eqn:error_dynamics_stacked}.
        \item The error dynamics \eqref{eqn:error_dynamics_stacked} is exponentially stable (on $\bbR^{nN}$) if and only if it is exponentially stable on $\calW$.\hfill$\square$ 
    \end{enumerate}
\end{proposition}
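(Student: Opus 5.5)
For item 1, we verify invariance block by block. Since $\unobs_i$ is the unobservable subspace of $(C_i,A)$, it is $A$-invariant and contained in $\ker C_i$. Hence for $w_i\in\unobs_i$ we have $(A-L_iC_i)w_i=Aw_i\in\unobs_i$. Because $I_N\otimes A-LC$ is block diagonal, this gives $(I_N\otimes A-LC)$-invariance of $\calW$. For the coupling term, the $i$-th block of $-\Gamma M(\calL_p\otimes I_n)w$ equals $-\gamma_i M_i\sum_j [\calL_p]_{ij}w_j$. This lies in $\im M_i=\unobs_i$ for \emph{any} $w$, since $M_i=T_{iu}T_{iu}^\top$ projects onto $\unobs_i$. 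So in fact $\im(\Gamma M(\calL_p\otimes I_n))\subseteq\calW$, which is stronger than invariance. Consequently, for each $p$, $\calW$ is invariant under the constant matrix $F_p\coloneqq I_N\otimes A-LC-\Gamma M(\calL_p\otimes I_n)$. Since the solution of \eqref{eqn:error_dynamics_stacked} is a finite-on-compacts product of exponentials $e^{F_p\tau}$, which preserve $\calW$, invariance under \eqref{eqn:error_dynamics_stacked} follows.

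For item 2, the direction ``only if'' is immediate because $\calW$ is invariant. For ``if'', we decompose $\tilde x=w+v$ with $w\coloneqq M\tilde x\in\calW$ and $v\coloneqq(I-M)\tilde x$. The block $v_i=T_{io}T_{io}^\top\tilde x_i$ satisfies \eqref{eqn:observable_error} independently of the switching, so $\|v(t)\|\le c\,e^{-\lambda(t-s)}\|v(s)\|$ for some $c,\lambda>0$ uniform in $s$ and $\sigma$.

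Next, apply $M$ to \eqref{eqn:error_dynamics_stacked}. Using $MA_{\text{blk}}M=A_{\text{blk}}M$ (from $A$-invariance of $\unobs_i$), $M\Gamma M=\Gamma M$, and $MLC=M L C$, we get
\begin{equation*}
\dot w = F_{\sigma(t)}w + B_{\sigma(t)}v,
\end{equation*}
where $F_{\sigma(t)}$ restricted to $\calW$ is the dynamics assumed exponentially stable, and $B_p\coloneqq M(I_N\otimes A-LC)(I-M)-\Gamma M(\calL_p\otimes I_n)(I-M)$ is uniformly bounded since $\calP$ is finite. Thus $w$ is the state of the restricted switched system driven by an exponentially decaying input.

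By variation of constants,
\begin{equation*}
w(t)=\Phi(t,s)w(s)+\int_s^t\Phi(t,\tau)B_{\sigma(\tau)}v(\tau)\,d\tau,
\end{equation*}
where $\Phi$ is the transition matrix on $\calW$. The hypothesis bounds $\|\Phi(t,\tau)\|\le p(t-\tau)e^{-r(t-\tau)}$ for every initial time. The convolution of $p(t-\tau)e^{-r(t-\tau)}$ with $e^{-\lambda(\tau-s)}$ is bounded by $q(t-s)e^{-\min(r,\lambda)(t-s)}$ for a polynomial $q$ of degree one higher. This gives the estimate \eqref{eqn:exponential_stability} for $\tilde x=w+v$, with $\|w(s)\|,\|v(s)\|\le\|\tilde x(s)\|$.

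The main point needing care is that the bounds must be uniform in $s$ and $\sigma$. This is why the definition requires the estimate for every initial time, and it holds here because $\calP$ is finite. I expect the convolution estimate, particularly the case $r=\lambda$, to be the only computational step. There the extra polynomial degree absorbs the resonance.
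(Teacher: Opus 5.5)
Your proof is correct and follows the paper's route. The splitting $w=M\tilde x$, $v=(I-M)\tilde x$ is exactly the paper's block lower-triangular form in the coordinates $\xi_u=T_u^\top\tilde x$, $\xi_o=T_o^\top\tilde x$: \eqref{eqn:observable_error} handles the observable part, the hypothesis handles the $\calW$-part, and finiteness of $\calP$ bounds the coupling term. The only difference is that you prove the cascade step directly by variation of constants and a convolution estimate, where the paper cites \cite[Thm.~2]{zhou2016auto}. One minor slip: your ``$MLC=MLC$'' should read $LCM=O$, though item~1 already gives $M(I_N\otimes A-LC)M=(I_N\otimes A-LC)M$.
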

\textit{Proof:} 1) Since each $\unobs_i$ is the largest $A$-invariant subspace contained in $\ker(C_i)$, $\calW$ is clearly $(I_N\otimes A - LC)$-invariant.
Since $M_i$ is the orthogonal projection matrix onto $\unobs_i$, we also have that $\calW$ is $-\Gamma M (\calL_{\sigma(t)}\otimes I_n)$-invariant.

2) ($\Leftarrow$) Let $T_u \coloneqq \diag(T_{1u},\dots,T_{Nu})$ and $T_o\coloneqq \diag(T_{1o},\dots,T_{No})$. Since $\calW$ is invariant under \eqref{eqn:error_dynamics_stacked}, defining $\xi_u\coloneqq T_u^\top \tilde x$ and $\xi_o \coloneqq T_o^\top \tilde x$, the error dynamics \eqref{eqn:error_dynamics_stacked} can be represented as a block lower-triangular system
\begin{equation}
\label{eqn:lower_block_triag}
    \begin{bmatrix}
        \dot \xi_o\\\dot \xi_u
    \end{bmatrix} =
    \begin{bmatrix}
        *_{11} & O\\
        *_{21} & *_{22}
    \end{bmatrix} \begin{bmatrix}
        \xi_o \\ \xi_u
    \end{bmatrix}.
\end{equation}
By \eqref{eqn:observable_error}, the $\xi_o$-subsystem is exponentially stable.
Exponential stability of the $\xi_u$-subsystem follows from exponential stability of \eqref{eqn:error_dynamics_stacked} on $\calW$.
Since $\calP$ is finite, $*_{21}$ is bounded.
Thus, \cite[Thm.~2]{zhou2016auto} establishes exponential stability of \eqref{eqn:lower_block_triag}.

($\Rightarrow$) This direction is obvious.\hfill$\blacksquare$

In fact, the error dynamics \eqref{eqn:error_dynamics_stacked} can be further simplified on the invariant subspace $\calW$. Indeed, since $C_i\tilde x_i=0$ whenever $\tilde x_i\in\unobs_i$, the error dynamics \eqref{eqn:error_dynamics_stacked} reduces to 
\begin{equation}
\label{eqn:error_stacked_on_W}
    \dot {\tilde x} = (I_N \otimes A) \tilde x - \Gamma M (\calL_{\sigma(t)} \otimes I_n) \tilde x
\end{equation}
on $\calW$. (Obviously, $\calW$ is invariant under \eqref{eqn:error_stacked_on_W}.)
Thus, the stability analysis of the distributed observer~\eqref{eqn:distributed_observer} can be reduced to that of \eqref{eqn:error_stacked_on_W} on the invariant subspace $\calW$.

\subsection{Commutative Coupling Structure}
\label{sec:commutative_coupling_assumption}

We proceed with the analysis of \eqref{eqn:error_stacked_on_W} assuming the commutative coupling structure, which is first introduced in \cite{koo2025lcss} to develop a distributed state estimation algorithm that utilizes intermittent communication protocols.

\begin{assumption}
\label{ass:commuting_coupling}
    For each $i\in\calN$, $AM_i = M_iA$.\hfill$\square$
\end{assumption}

There are some equivalent conditions that are useful when verifying Assumption~\ref{ass:commuting_coupling}.

\begin{proposition}
\label{prop:equivalent_conditions}
    For $A$, $A_{ir}$, $\unobs_i$, and $M_i$ defined in Section~\ref{sec:considered_distributed_observer_model}, the following conditions are equivalent:%
    \vspace{2mm}

    \noindent
    \begin{minipage}{0.5\columnwidth}
        \begin{enumerate}%
            \item[(i)] $AM_i = M_iA$,
            \item[(ii)] $\unobs_i^\perp$ is $A$-invariant,
        \end{enumerate}
    \end{minipage}\begin{minipage}{0.5\columnwidth}
        \begin{enumerate}
            \item[(iii)] $\unobs_i$ is $A^\top$-invariant,
            \item[(iv)] $A_{ir}=O$.\hfill$\square$
        \end{enumerate}
    \end{minipage}
\end{proposition}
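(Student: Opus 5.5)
We prove the cycle of equivalences (i)$\Leftrightarrow$(ii)$\Leftrightarrow$(iii) together with (ii)$\Leftrightarrow$(iv). The one fact we use is that $M_i=T_{iu}T_{iu}^\top$ is the orthogonal projection onto $\unobs_i$, so $I-M_i=T_{io}T_{io}^\top$ is the orthogonal projection onto $\unobs_i^\perp$. We also use that $\unobs_i$ is $A$-invariant, being the unobservable subspace of $(C_i,A)$.

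For (i)$\Rightarrow$(ii), take $v\in\unobs_i^\perp$, so that $M_iv=0$. Then $M_iAv=AM_iv=0$, hence $Av\in\ker M_i=\unobs_i^\perp$.

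For (ii)$\Rightarrow$(i), we combine (ii) with the $A$-invariance of $\unobs_i$. Since $\bbR^n=\unobs_i\oplus\unobs_i^\perp$, write $x=M_ix+(I-M_i)x$. Because $AM_ix\in\unobs_i$ and $A(I-M_i)x\in\unobs_i^\perp$, applying $M_i$ gives $M_iAx=AM_ix$. This is the only step that needs the $A$-invariance of $\unobs_i$, and it is what makes the equivalence nontrivial rather than a mere restatement.

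For (ii)$\Leftrightarrow$(iii), we use the standard duality that a subspace $\calV$ is $A$-invariant if and only if $\calV^\perp$ is $A^\top$-invariant. To see one direction, let $\calV$ be $A$-invariant, $w\in\calV^\perp$ and $v\in\calV$; then $\langle A^\top w,v\rangle=\langle w,Av\rangle=0$. Applying this with $\calV=\unobs_i^\perp$, and using $(\unobs_i^\perp)^\perp=\unobs_i$, gives the equivalence in both directions.

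For (ii)$\Leftrightarrow$(iv), read off \eqref{eqn:observability_decomposition}: $A_{ir}=T_{iu}^\top AT_{io}$. Since the columns of $T_{io}$ span $\unobs_i^\perp$, we have $A_{ir}=O$ if and only if $AT_{io}$ has no component along $\unobs_i$, i.e., $\im(AT_{io})\subseteq\unobs_i^\perp$. This is exactly the statement that $A\unobs_i^\perp\subseteq\unobs_i^\perp$, which is (ii).

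All of these steps are routine. The only point requiring care is remembering to invoke the $A$-invariance of $\unobs_i$ in (ii)$\Rightarrow$(i); it also implicitly explains why the upper-right block of \eqref{eqn:observability_decomposition} is zero.
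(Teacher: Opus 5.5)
Your proof is correct and takes essentially the same route as the paper, which uses the fact that a projection commutes with $A$ if and only if its range and kernel are both $A$-invariant (together with the $A$-invariance of $\unobs_i$) for (i)$\Leftrightarrow$(ii), the orthogonal-complement duality for (ii)$\Leftrightarrow$(iii), and $A_{ir}=T_{iu}^\top A T_{io}$ for (ii)$\Leftrightarrow$(iv). The only difference is that you prove the two cited textbook facts inline, and the one step left implicit is that (iii)$\Rightarrow$(ii) comes from rerunning your one-sided duality argument with $A^\top$ in place of $A$, using $(A^\top)^\top=A$.
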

\vspace{2mm}
\textit{Proof:} In order for the projection matrix $M_i$ to commute with $A$, a necessary and sufficient condition is that both $\im(M_i)=\unobs_i$ and $\ker(M_i)=\unobs_i^\perp$ are $A$-invariant \cite[Thm.~6.10]{hoffman}. Since the unobservable subspace $\unobs_i$ is already $A$-invariant, conditions (i) and (ii) are equivalent. The equivalence of (ii) and (iii) can be shown as in \cite[Prop.~3.1.3]{invariant_subspace}. The equivalence of (ii) and (iv) can be shown based on the fact that $A_{ir} = T_{iu}^\top A T_{io}$.\hfill$\blacksquare$

The adoption of this ``commutative coupling structure" simplifies the analysis of \eqref{eqn:error_stacked_on_W} to a great extent and, in turn, facilitates a derivation of tight conditions for the stability of the distributed observer \eqref{eqn:distributed_observer}.
In the meantime, this assumption holds for a fairly broad class of systems.
In general, when the system matrix $A$ is diagonalizable (over $\bbC$), Assumption~\ref{ass:commuting_coupling} can always be satisfied up to a coordinate transformation \cite[Lemma~3]{koo2025lcss}.
This class of systems encompasses all neutrally stable systems considered in \cite{liu_continuous,zhang2024TAC,basu2024CDC}, as well as other practically relevant systems, such as the lithium-ion battery pack model considered in \cite{khalil26auto}.
Nevertheless, these are not the only systems that possess the commutative coupling structure.
For example, Assumption~\ref{ass:commuting_coupling} is satisfied in the vehicular platoon system considered in  \cite[Sec.~4.4]{wang2024auto}, although its system matrix is not diagonalizable.

\begin{remark}
    The validity of Assumption~\ref{ass:commuting_coupling} depends on the choice of coordinates,\footnote{The $A$-invariance of $\unobs_i^\perp$ is coordinate-dependent, unlike that of $\unobs_i$.} and hence an appropriate coordinate transformation may be necessary for Assumption~\ref{ass:commuting_coupling} to hold.
    Notably, when $A$ is diagonalizable, such a coordinate transformation can be performed without imposing any additional restrictions.
    Indeed, one always-applicable option for ensuring Assumption~\ref{ass:commuting_coupling} is to transform $A$ into its real Jordan form \cite[Lemma~3]{koo2025lcss}, which can be performed locally by each agent without requiring any global information.
    In fact, such coordinate transformations have been employed in \cite{liu_continuous,zhang2024TAC,basu2024CDC} for neutrally stable cases.\hfill$\square$
\end{remark}

\section{Stability Analysis}
\label{sec:stability_analysis}

\subsection{Necessary and Sufficient Condition for Stability}
\label{sec:necesary_and_sufficient_conditions}

Under Assumption~\ref{ass:commuting_coupling}, it can be seen that
\begin{equation*}
    (I_N\otimes A) \Gamma M(\calL_{\sigma(t)}\otimes I_n) = \Gamma M(\calL_{\sigma(t)}\otimes I_n) (I_N \otimes A)
\end{equation*}
regardless of $\calL_{\sigma(t)}$.
This implies that the vector fields $(I_N\otimes A)\tilde x$ and $-\Gamma M(\calL_{\sigma(t)}\otimes I_n)\tilde x$, which compose the dynamics \eqref{eqn:error_stacked_on_W}, are commuting vector fields.
Thus, the effects of two vector fields can be perfectly separated \cite[Thm.~5.1]{hall}.
To be precise, let $\Phi_{A+M}(t,s)$ denote the state transition matrix of \eqref{eqn:error_stacked_on_W}, $\Phi_A(t,s)\coloneqq e^{(I_N\otimes A)(t-s)}$, and $\Phi_M(t,s)$ denote the state transition matrix of the \emph{coupling-only system}
\begin{equation}
\label{eqn:coupling_only}
    \dot{\sfx}_i = \gamma_i M_i \sum_{j\in\calN} a_{\sigma(t)}^{ij}(\sfx_j-\sfx_i),\quad \forall i \in \calN,
\end{equation}
which can be compactly written as
\begin{equation}
\label{eqn:coupling_only_stacked}
    \dot{\sfx} = -\Gamma M(\calL_{\sigma(t)}\otimes I_n)\sfx
\end{equation}
for $\sfx\coloneqq\col(\sfx_1,\dots,\sfx_N)$.
Then, we have 
\begin{equation}
\label{eqn:flow_commute}
    \Phi_{A+M}(t,s) = \Phi_A(t,s)\Phi_M(t,s),
\end{equation}
for any $t,s\in[0,\infty)$.
A simple proof tailored to our linear time-varying case is provided in Appendix.

Through the lens of this separation, one can conclude that \eqref{eqn:error_stacked_on_W} becomes stable if the coupling-only system \eqref{eqn:coupling_only_stacked} decays fast enough to dominate the growth of $\Phi_A$.
However, the growth rate of $\Phi_A$ may differ across modes.
A more careful analysis is thus required for a tighter stability criterion.

Toward this end, we consider the primary decomposition for $A$ \cite[Sec.~6.8]{hoffman}, \cite[Sec.~12.2]{invariant_subspace}. Let $q_1^{\eta_1}(s) \cdots q_D^{\eta_D}(s)$ be the minimal polynomial of $A$, where $q_1,\dots,q_D$ are distinct irreducible polynomials over $\bbR$ and $\eta_1,\dots,\eta_D\in\bbN$. Then, we obtain the primary decomposition
\begin{equation}
\label{eqn:primary_decomposition}
    \bbR^n = \calR_1 \oplus \cdots \oplus \calR_D
\end{equation}
where $\calR_d\coloneqq \ker(q_d^{\eta_d}(A))\subseteq \bbR^n$ for each $d=1,\dots,D$.
By definition, each subspace $\calR_d$ is associated with either a real eigenvalue of $A$ or a pair of complex conjugate eigenvalues of $A$. We denote the former by $\lambda_d$ and the latter by $\lambda_d\pm i\mu_d$.
In fact, in the former case, $\calR_d$ is the generalized eigenspace corresponding to $\lambda_d$. In the latter case, $\calR_d$ can be characterized as the set of all real-valued vectors contained in the direct sum of the generalized eigenspaces in $\bbC^n$ corresponding to $\lambda_d + i \mu_d$ and $\lambda_d - i\mu_d$.
It can be shown that each $\calR_d$ is $A$-invariant.

For each $d=1,\dots,D$, we now define a subspace
\begin{equation*}
    \calW_d \coloneqq (\unobs_1 \cap \calR_d)\times \cdots \times (\unobs_N\cap \calR_d).
\end{equation*}
In fact, these subspaces form a direct sum decomposition of $\calW$ that allows the stability analysis of \eqref{eqn:error_stacked_on_W} on $\calW$ to be carried out separately on each $\calW_d$.

\begin{proposition}
\label{prop:w_d}
    The following statements hold:
    \begin{enumerate}
        \item $\calW = \calW_1\oplus \cdots \oplus \calW_D$.
        \item Under Assumption~\ref{ass:commuting_coupling}, each subspace $\calW_d$ is $(I_N\otimes A)$- and $-\Gamma M(\calL_{p}\otimes I_n)$-invariant for each $p\in\calP$, and thus invariant under \eqref{eqn:error_stacked_on_W}.
        \item Under Assumption~\ref{ass:commuting_coupling}, system \eqref{eqn:error_stacked_on_W} is exponentially stable on $\calW$ if and only if it is exponentially stable on $\calW_d$ for each $d=1,\dots,D$.\hfill$\square$
    \end{enumerate}
\end{proposition}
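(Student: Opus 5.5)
The plan is to reduce everything to a single fact: each $\unobs_i$ splits along the primary decomposition, i.e.
\begin{equation*}
\unobs_i=(\unobs_i\cap\calR_1)\oplus\cdots\oplus(\unobs_i\cap\calR_D).
\end{equation*}
This holds because $\unobs_i$ is $A$-invariant. The projection $P_d$ onto $\calR_d$ along $\bigoplus_{d'\neq d}\calR_{d'}$ is a polynomial in $A$ (standard for the primary decomposition, \cite[Sec.~6.8]{hoffman}). Hence $P_d\unobs_i\subseteq\unobs_i\cap\calR_d$, and any $v\in\unobs_i$ satisfies $v=\sum_d P_dv$. Taking Cartesian products over $i$ then gives statement 1): every $\col(v_1,\dots,v_N)\in\calW$ decomposes as $\sum_d\col(P_dv_1,\dots,P_dv_N)$ with the $d$-th term in $\calW_d$. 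The sum is direct because each $\calW_d\subseteq\calR_d^N$ and the $\calR_d^N$ form a direct sum of $\bbR^{nN}$.

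For 2), $(I_N\otimes A)$-invariance of $\calW_d$ is immediate, since both $\unobs_i$ and $\calR_d$ are $A$-invariant. For the coupling term, take $\sfx\in\calW_d$. Then $\sum_j a_p^{ij}(\sfx_j-\sfx_i)\in\calR_d$, because $\calR_d$ is a subspace. Under Assumption~\ref{ass:commuting_coupling}, $M_i$ commutes with $A$, hence with $q_d^{\eta_d}(A)$, so $M_i\calR_d\subseteq\calR_d$. Also $M_i$ maps into $\unobs_i$. Therefore the $i$-th block $-\gamma_iM_i\sum_j a_p^{ij}(\sfx_j-\sfx_i)$ lies in $\unobs_i\cap\calR_d$, which proves invariance. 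Invariance under \eqref{eqn:error_stacked_on_W} for any switching signal then follows, since the state transition matrix is a product of exponentials of matrices that leave $\calW_d$ invariant.

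For 3), the ($\Rightarrow$) direction is trivial because $\calW_d\subseteq\calW$. For ($\Leftarrow$), let $\Pi_d$ denote the projection of $\calW$ onto $\calW_d$ along the other summands; this is a fixed linear map, so $\|\Pi_d\|$ is a finite constant. By invariance, $\Phi_{A+M}(t,s)\tilde x(s)=\sum_d\Phi_{A+M}(t,s)\Pi_d\tilde x(s)$. Applying the bound \eqref{eqn:exponential_stability} on each $\calW_d$ gives
\begin{equation*}
\|\tilde x(t)\|_2\le\Big(\sum_d\|\Pi_d\|\,p_d(t-s)\Big)e^{-\min_d r_d(t-s)}\|\tilde x(s)\|_2 .
\end{equation*}
This is a bound of the required form with polynomial $\sum_d\|\Pi_d\|p_d$ and rate $\min_d r_d$.

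The only point needing care is the first step: the splitting of $\unobs_i$ relies on the projections $P_d$ being polynomials in $A$. This is where $A$-invariance of $\unobs_i$ enters, while the commutativity assumption is what makes the coupling term respect $\calR_d$. Everything else is routine bookkeeping.
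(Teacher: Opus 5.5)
Your proof is correct. Parts 1) and 3) follow the paper's line. For 1), the paper cites the fact that an $A$-invariant subspace splits along the primary decomposition \cite[p.~263]{hoffman}, and you prove that fact via the projections $P_d$ being polynomials in $A$. For 3), the paper calls it a direct consequence of 1) and 2), and your estimate with the projections $\Pi_d$ spells that out.

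The real difference is how you obtain $M_i\calR_d\subseteq\calR_d$ in part 2). The paper uses Proposition~\ref{prop:equivalent_conditions} to get $A$-invariance of $\unobs_i^\perp$. It then splits both $\unobs_i$ and $\unobs_i^\perp$ along the primary decomposition, deduces $\calR_d=(\unobs_i\cap\calR_d)\oplus(\unobs_i^\perp\cap\calR_d)$, and reads off $M_iv=v_1\in\calR_d$. You instead observe that $M_i$ commutes with $q_d^{\eta_d}(A)$ and therefore preserves its kernel $\calR_d$.

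Your argument is shorter and more general. It uses nothing about $M_i$ beyond $AM_i=M_iA$, not even that $M_i$ is an orthogonal projection. The paper's argument stays within the invariant-subspace language of Proposition~\ref{prop:equivalent_conditions} and exhibits the orthogonal splitting of $\calR_d$ explicitly. That splitting also follows immediately from your result, since $M_i$ and $I-M_i$ then both map $\calR_d$ into itself.
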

\textit{Proof:}
The following plays a key role: If $\calV$ is an $A$-invariant subspace, then $\calV=(\calV\cap \calR_1)\oplus\cdots\oplus (\calV\cap\calR_D)$ \cite[p.~263]{hoffman}.

1) Since $\unobs_i$ is $A$-invariant, it can be decomposed as
\begin{equation}
\label{eqn:invariant_subspace_decomposition}
    \begin{aligned}
    \unobs_i &= (\unobs_i \cap \calR_1)\oplus \cdots \oplus (\unobs_i \cap \calR_D).\\
    \end{aligned}
\end{equation}
Thus, we have $\calW = \calW_1\oplus \cdots \oplus \calW_D$.

2) By statement~1) of Proposition~\ref{prop:preliminary_observation}, it suffices to show that $\calR_d\times \cdots \times \calR_d$ is $(I_N\otimes A)$- and $-\Gamma M(\calL_p\otimes I_n)$-invariant. The former is due to the $A$-invariance of $\calR_d$. To show the latter, it suffices to show that $\calR_d$ is $M_i$-invariant for each $i\in\calN$.
As $\unobs_i^\perp$ is $A$-invariant under Assumption~\ref{ass:commuting_coupling}, we have
\begin{equation}
\label{eqn:invariant_subspace_decomposition_perp}
    \unobs_i^\perp = (\unobs_i^\perp \cap \calR_1)\oplus \cdots \oplus (\unobs_i^\perp \cap \calR_D).
\end{equation}
Combining \eqref{eqn:invariant_subspace_decomposition} and \eqref{eqn:invariant_subspace_decomposition_perp} then yields $\bbR^n = \unobs_i \oplus \unobs_i^\perp= (\unobs_i\cap\calR_1) \oplus (\unobs_i^\perp \cap \calR_1)\oplus \cdots \oplus (\unobs_i \cap \calR_D) \oplus (\unobs_i^\perp \cap \calR_D)$, from which one can show that $\calR_d = (\unobs_i \cap \calR_d) \oplus (\unobs_i^\perp \cap \calR_d)$. Now, suppose that $v \in \calR_d$. Then, there exist $v_1\in(\unobs_i\cap\calR_d)$ and $v_2 \in (\unobs_i^\perp\cap\calR_d)$ such that $v=v_1+v_2$. Therefore, we have $M_i v = v_1\in\calR_d$, showing that $\calR_d$ is $M_i$-invariant.

3) This is a direct consequence of 1) and 2).\hfill$\blacksquare$

As the growth rate of $\Phi_A$ on $\calW_d$ can be exactly quantified by $\lambda_d$, using \eqref{eqn:flow_commute}, we can obtain a necessary and sufficient condition for the stability of \eqref{eqn:error_stacked_on_W} on $\calW_d$.

\begin{proposition}
\label{prop:condition_on_Wd}
    System \eqref{eqn:error_stacked_on_W} is exponentially stable on $\calW_d$ if and only if $\lambda_d<0$ or the coupling-only system \eqref{eqn:coupling_only_stacked} is exponentially stable on $\calW_d$ with a guaranteed decay rate larger than $\lambda_d$.\hfill$\square$
\end{proposition}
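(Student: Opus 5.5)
Throughout, we work on $\calW_d$, which is invariant under both \eqref{eqn:error_stacked_on_W} and \eqref{eqn:coupling_only_stacked} by statement~2) of Proposition~\ref{prop:w_d}, and we use the factorization \eqref{eqn:flow_commute}, $\Phi_{A+M}(t,s)=\Phi_A(t,s)\Phi_M(t,s)$. The first step is to pin down the growth of $\Phi_A$ on $\calW_d$ from both sides. Since $\calW_d\subseteq\calR_d\times\cdots\times\calR_d$, and $A$ restricted to $\calR_d$ has spectrum $\{\lambda_d\}$ or $\{\lambda_d\pm i\mu_d\}$, there is a polynomial $p_d$ such that $\|\Phi_A(t,s)v\|_2\le p_d(t-s)e^{\lambda_d(t-s)}\|v\|_2$ for $v\in\calW_d$. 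For a lower bound, note that $e^{-(I_N\otimes A)\tau}$ restricted to the invariant subspace $\calR_d^N$ has spectrum with real part $-\lambda_d$. Hence $\|\Phi_A(t,s)^{-1}w\|_2\le \tilde p_d(t-s)e^{-\lambda_d(t-s)}\|w\|_2$, which gives $\|\Phi_A(t,s)v\|_2\ge e^{\lambda_d(t-s)}\|v\|_2/\tilde p_d(t-s)$ on $\calR_d^N$. Moreover, $\Phi_M(t,s)$ maps $\calW_d$ into itself.

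For sufficiency, there are two cases. If the coupling-only system is exponentially stable on $\calW_d$ with rate $r>\lambda_d$, then for $v\in\calW_d$,
\begin{equation*}
\|\Phi_{A+M}(t,s)v\|_2\le p_d(t-s)e^{\lambda_d(t-s)}\,p(t-s)e^{-r(t-s)}\|v\|_2 .
\end{equation*}
This is exponential stability with rate $r-\lambda_d>0$. If instead $\lambda_d<0$, I would show that $\Phi_M$ is uniformly bounded on $\calW$, and then the same product bound gives rate $-\lambda_d$. Boundedness of $\Phi_M$ should follow from the structure of \eqref{eqn:coupling_only}. Decompose each $\sfx_i=M_i\sfx_i+(I-M_i)\sfx_i$; the component $(I-M_i)\sfx_i$ is constant. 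On $\calW$, each $\sfx_i\in\unobs_i$, so the dynamics $\dot\sfx_i=-\gamma_i\sum_j a^{ij}(\sfx_i-M_i\sfx_j)$ (using $M_i\sfx_i=\sfx_i$) is a Metzler-like consensus-type update. Taking $\|\cdot\|_2$ and using $\|M_i\|_2\le1$, the quantity $\max_i\|\sfx_i\|_2$ is nonincreasing: at a maximizing index $i$, $\frac{d}{dt}\|\sfx_i\|_2^2=-2\gamma_i\sum_j a^{ij}(\|\sfx_i\|^2-\sfx_i^\top M_i\sfx_j)\le0$. This gives $\|\Phi_M(t,s)\|\le\sqrt N$ on $\calW$, uniformly over switching.

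For necessity, suppose \eqref{eqn:error_stacked_on_W} is exponentially stable on $\calW_d$ with rate $r$ and polynomial $p$, and suppose $\lambda_d\ge0$. Then $\Phi_M(t,s)v=\Phi_A(t,s)^{-1}\Phi_{A+M}(t,s)v$. Since $\Phi_{A+M}(t,s)v\in\calW_d\subseteq\calR_d^N$, the lower bound gives
\begin{equation*}
\|\Phi_M(t,s)v\|_2\le\tilde p_d(t-s)e^{-\lambda_d(t-s)}p(t-s)e^{-r(t-s)}\|v\|_2 .
\end{equation*}
So the coupling-only system is exponentially stable on $\calW_d$ with guaranteed decay rate $r+\lambda_d>\lambda_d$. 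This shows that the disjunction in the statement holds; it holds trivially when $\lambda_d<0$.

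The main obstacle is the uniform boundedness of $\Phi_M$ under arbitrary switching, which is needed for the case $\lambda_d<0$ without any coupling condition. The max-norm argument is the natural route, but the maximizing index may change and $\max_i\|\sfx_i\|$ is only Lipschitz. I would handle this with Dini derivatives, or equivalently by considering a per-index comparison at times where $\|\sfx_i\|=\max_j\|\sfx_j\|$. Everything else is linear algebra on $\calR_d$.
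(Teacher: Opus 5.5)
Your proposal is correct and follows essentially the same route as the paper: it uses the factorization \eqref{eqn:flow_commute} together with the invariance of $\calW_d$, polynomial-times-$e^{\pm\lambda_d(t-s)}$ bounds on $\Phi_A$ and its inverse over $\calR_d\times\cdots\times\calR_d$ for the two directions, and uniform boundedness of $\Phi_M$ for the case $\lambda_d<0$. The only difference is that you prove non-expansiveness of $\max_i\|\sfx_i\|_2$ on $\calW$ directly, via $M_i\sfx_i=\sfx_i$ and $\|M_i\|_2\le 1$, whereas the paper cites it from the distributed linear-equation-solver literature; this is harmless and slightly more careful, since the bound is only needed on $\calW$.
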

\textit{Proof:} ($\Leftarrow$) By \eqref{eqn:flow_commute}, there exists a polynomial $p_1$ such that
\begin{equation}
    \label{eqn:growth_bound}
    \begin{aligned}
        \|\tilde x(t)\|_2 &= \|\Phi_A(t,s)(\Phi_M(t,s)\tilde x(s))\|_2\\
        &\leq p_1(t-s) e^{\lambda_d (t-s)} \|\Phi_M(t,s)\tilde x(s)\|_2
    \end{aligned}
\end{equation}
for any $t\geq s$ and $\tilde x(s)\in\calW_d$.
For the inequality, we use the fact that $\Phi_M(t,s)\tilde x(s) \in \calW_d$ if $\tilde x(s)\in\calW_d$, which is due to statement~2) of Proposition~\ref{prop:w_d}.
Note that the growth rate of $\Phi_A$ can be quantified by $\lambda_d$ on $\calW_d\subseteq\calR_d\times \cdots\times \calR_d$.

First, suppose that $\lambda_d<0$.
In this case, it suffices to show that $\|\Phi_M(t,s)\|_2\leq c_1$ for some constant $c_1$. Indeed, it is shown in \cite[Sec.~4-A]{liu2016acc} that the coupling-only system \eqref{eqn:coupling_only_stacked} is non-expansive in a norm defined by $\|\sfx\|_{2,\infty}\coloneqq \|\!\col(\|\sfx_1\|_2,\dots,\|\sfx_N\|_2)\|_\infty$. The equivalence between the norms $\|\cdot\|_{2,\infty}$ and $\|\cdot\|_2$ then implies that there exists a constant $c_1$ such that $\|\Phi_M(t,s)\|_2\leq c_1$ for all $t\geq s$.

Now, suppose that $\lambda_d\geq0$ and \eqref{eqn:coupling_only_stacked} is exponentially stable with guaranteed decay rate $r>\lambda_d$.
Then, there exists a polynomial $p_2$ such that
\begin{equation*}
    \|\Phi_M(t,x)\tilde x(s)\|_2 \leq p_2(t-s) e^{-r(t-s)} \|\tilde x(s)\|_2
\end{equation*}
for any $t\geq s$ and $\tilde x(s)\in\calW_d$. 
Combining this with \eqref{eqn:growth_bound} shows the exponential stability of \eqref{eqn:error_stacked_on_W} on $\calW_d$.

($\Rightarrow$) 
Suppose that system \eqref{eqn:error_stacked_on_W} is exponentially stable on $\calW_d$ with some guaranteed decay rate $r>0$.
From \eqref{eqn:flow_commute}, we have $\Phi_M(t,s) = \Phi_A(s,t)\Phi_{A+M}(t,s)$.
Thus, there exist polynomials $p_3$ and $p_4$ such that
\begin{align*}
    \|\Phi_M(t,s)\tilde x(s)\|_2 &= \|\Phi_A(s,t)( \Phi_{A+M}(t,s)\tilde x(s))\|_2\\
    &\leq p_3(t-s)e^{-\lambda_d(t-s)}\|\Phi_{A+M}(t,s)\tilde x(s)\|_2\\
    &\leq p_4(t-s) e^{-\lambda_d(t-s)} e^{-r(t-s)} \| \tilde x(s)\|_2
\end{align*}
for any $t\geq s$ and $\tilde x(s)\in\calW_d$.
Here, in the first inequality, we use the fact that $\Phi_{A+M}(t,s)\tilde x(s)\in\calW_d$ if $\tilde x(s)\in\calW_d$, which is again due to statement~2) of Proposition~\ref{prop:w_d}.
Therefore, if $\lambda_d\geq0$, we can conclude that the coupling-only system \eqref{eqn:coupling_only_stacked} is exponentially stable on $\calW_d$ with guaranteed decay rate $\lambda_d+r$, which is larger than $\lambda_d$.\hfill$\blacksquare$

Finally, combining Propositions~\ref{prop:preliminary_observation}, \ref{prop:w_d}, and \ref{prop:condition_on_Wd}, we can conclude that the stability of the distributed observer \eqref{eqn:distributed_observer} can be determined by examining the stability of the coupling-only system, which is indeed an easier task.

\begin{theorem}
\label{thm:only_stability}
    Suppose that Assumption~\ref{ass:commuting_coupling} holds. Then, the error dynamics \eqref{eqn:error_dynamics_stacked} of the distributed observer \eqref{eqn:distributed_observer} is exponentially stable if and only if, for each $d\in\{1,\dots,D\}$ such that $\lambda_d\geq0$, the coupling-only system \eqref{eqn:coupling_only_stacked} is exponentially stable on $\calW_d$ with a guaranteed decay rate larger than $\lambda_d$.~$\square$
\end{theorem}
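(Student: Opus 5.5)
The plan is to chain the three preceding propositions, since each performs one reduction step. First, statement~2) of Proposition~\ref{prop:preliminary_observation} says the error dynamics \eqref{eqn:error_dynamics_stacked} is exponentially stable on $\bbR^{nN}$ if and only if it is exponentially stable on $\calW$. On $\calW$, the dynamics \eqref{eqn:error_dynamics_stacked} coincides with \eqref{eqn:error_stacked_on_W}, because $LC\tilde x=0$ whenever $\tilde x\in\calW$ and $\calW$ is invariant under both systems. So the two systems generate the same trajectories from initial conditions in $\calW$, and exponential stability of \eqref{eqn:error_dynamics_stacked} on $\calW$ is equivalent to exponential stability of \eqref{eqn:error_stacked_on_W} on $\calW$.

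Second, under Assumption~\ref{ass:commuting_coupling}, statement~3) of Proposition~\ref{prop:w_d} reduces exponential stability of \eqref{eqn:error_stacked_on_W} on $\calW$ to exponential stability on each $\calW_d$, $d=1,\dots,D$. If one wants the direction ``each $\calW_d$ implies $\calW$'' made explicit, I would decompose $\tilde x(s)=\sum_d w_d$ with $w_d\in\calW_d$, using the direct sum from statement~1). Each component evolves within its invariant $\calW_d$. Because $\calW$ is finite-dimensional, the projections onto the summands are bounded, so $\sum_d\|w_d\|_2\le c\|\tilde x(s)\|_2$. Summing the individual bounds and taking the minimum decay rate and the sum of the polynomials then gives an exponential estimate on $\calW$. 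The converse direction is immediate by restriction.

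Third, Proposition~\ref{prop:condition_on_Wd} characterizes stability on each $\calW_d$: it holds if and only if $\lambda_d<0$, or the coupling-only system \eqref{eqn:coupling_only_stacked} is exponentially stable on $\calW_d$ with guaranteed decay rate larger than $\lambda_d$. For indices with $\lambda_d<0$ the condition is automatically satisfied, so the requirement becomes vacuous. Collecting over $d$, exponential stability of \eqref{eqn:error_dynamics_stacked} is equivalent to the stated condition for each $d$ with $\lambda_d\ge0$.

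No step is a genuine obstacle, since the work lies in the earlier propositions. The only point needing care is the bookkeeping in the first step: exponential stability on $\calW$ is a property of \eqref{eqn:error_dynamics_stacked} in Proposition~\ref{prop:preliminary_observation} but of \eqref{eqn:error_stacked_on_W} in Propositions~\ref{prop:w_d} and \ref{prop:condition_on_Wd}. I would settle this by noting that the two systems have identical solutions on the common invariant subspace $\calW$.
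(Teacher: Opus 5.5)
Your proposal is correct and follows the paper's own argument, which proves the theorem simply by combining Propositions~\ref{prop:preliminary_observation}, \ref{prop:w_d}, and~\ref{prop:condition_on_Wd}. Your additional bookkeeping only makes explicit what the paper leaves implicit: that \eqref{eqn:error_dynamics_stacked} and \eqref{eqn:error_stacked_on_W} have identical solutions on the common invariant subspace $\calW$, and the bounded-projection argument for $\calW=\calW_1\oplus\cdots\oplus\calW_D$.
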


Through similar arguments, one can also derive a necessary and sufficient condition for guaranteeing a prescribed decay rate.
Indeed, a larger decay rate of the coupling-only system is required for all unstable modes and the stable modes whose decay is slower than desired.
The eigenvalues of $A_{io}-L_{io}C_{io}$ must also be assigned appropriately during the design of the local observer gain $L_i$.

\begin{theorem}
\label{thm:stability_criterion}
    Suppose that Assumption~\ref{ass:commuting_coupling} holds. Then, the error dynamics \eqref{eqn:error_dynamics_stacked} of the distributed observer \eqref{eqn:distributed_observer} is exponentially stable with a guaranteed decay rate $r>0$ if and only if the following two hypotheses hold: 
    \begin{enumerate}
        \item For each $i\in\calN$, all eigenvalues of $A_{io} - L_{io} C_{io}$ have real parts less than or equal to $-r$.
        \item For each $d\in\{1,\dots,D\}$ such that $\lambda_d > -r$, the coupling-only system \eqref{eqn:coupling_only_stacked} is exponentially stable on $\calW_d$ with guaranteed decay rate $r+\lambda_d$.\hfill$\square$
    \end{enumerate}
\end{theorem}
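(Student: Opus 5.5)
We will mirror the route of Theorem~\ref{thm:only_stability}, but keep track of rates throughout. We use the block lower-triangular form \eqref{eqn:lower_block_triag} with $\xi_o=T_o^\top\tilde x$ and $\xi_u=T_u^\top\tilde x$. By \eqref{eqn:observable_error}, the $\xi_o$-subsystem is the block-diagonal LTI system with blocks $A_{io}-L_{io}C_{io}$. The $\xi_u$-subsystem is the restriction of \eqref{eqn:error_stacked_on_W} to $\calW$. The first step is a rate-tracking version of statement~2) of Proposition~\ref{prop:preliminary_observation}. If both diagonal subsystems have guaranteed decay rate $r$ and $*_{21}$ is bounded, the variation-of-constants formula gives
\begin{equation*}
\xi_u(t)=\Phi_u(t,s)\xi_u(s)+\int_s^t\Phi_u(t,\tau)*_{21}(\tau)\xi_o(\tau)\,d\tau .
\end{equation*}
The integral is then bounded by $\int_s^t p(t-\tau)e^{-r(t-\tau)}p'(\tau-s)e^{-r(\tau-s)}d\tau$, which equals a polynomial in $t-s$ times $e^{-r(t-s)}$. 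So the whole system has rate $r$; equal rates cost only polynomial factors, which the definition ignores.

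\textbf{Sufficiency.} By hypothesis~1), $\|e^{(A_{io}-L_{io}C_{io})t}\|\le p(t)e^{-rt}$, so the $\xi_o$-part decays with rate $r$. Decomposing $\calW=\oplus_d\calW_d$ as in Proposition~\ref{prop:w_d}, it suffices to obtain rate $r$ on each $\calW_d$. The projections onto the $\calW_d$ are bounded, so the rates combine with only constant factors. For $d$ with $\lambda_d>-r$, we use \eqref{eqn:flow_commute} as in \eqref{eqn:growth_bound}. This gives $\|\tilde x(t)\|\le p_1 e^{\lambda_d(t-s)}p_2e^{-(r+\lambda_d)(t-s)}\|\tilde x(s)\|$, which has rate $r$. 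For $d$ with $\lambda_d\le -r$, we use instead the uniform bound $\|\Phi_M\|_2\le c_1$ from the non-expansiveness argument of \cite{liu2016acc}. This gives $p_1e^{\lambda_d(t-s)}c_1\le p_1c_1e^{-r(t-s)}$.

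\textbf{Necessity of 2).} Suppose \eqref{eqn:error_dynamics_stacked} has rate $r$. Then its restriction to the invariant subspace $\calW_d\subseteq\calW$ has rate $r$, since there it coincides with \eqref{eqn:error_stacked_on_W}. Exactly as in the ($\Rightarrow$) part of Proposition~\ref{prop:condition_on_Wd}, we write $\Phi_M(t,s)=\Phi_A(s,t)\Phi_{A+M}(t,s)$. Since $\Phi_A(s,t)$ on $\calR_d$ grows at most like a polynomial times $e^{-\lambda_d(t-s)}$, we obtain the bound $p_4e^{-(r+\lambda_d)(t-s)}$. Whenever $r+\lambda_d>0$, this is the required exponential stability with rate $r+\lambda_d$.

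\textbf{Necessity of 1) — the main obstacle.} The $\xi_o$-subsystem is not invariant, so we cannot simply restrict to it. Instead we use the block-triangular structure in the opposite direction. The subspace $\calW$ is invariant, so the quotient dynamics on $\bbR^{nN}/\calW$ is well defined. In the coordinates $\xi_o$ this quotient is exactly the autonomous system $\dot\xi_o=\diag(A_{io}-L_{io}C_{io})\xi_o$, because the $(1,2)$ block is $O$. Concretely, $\xi_o(t)=T_o^\top\Phi(t,s)\tilde x(s)$. Choosing $\tilde x(s)=T_o\xi_o(s)$ gives $\|\xi_o(t)\|\le\|\tilde x(t)\|\le p(t-s)e^{-r(t-s)}\|\xi_o(s)\|$. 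Hence $e^{(A_{io}-L_{io}C_{io})t}$ has rate $r$, which for a matrix exponential forces every eigenvalue to have real part at most $-r$ (take an eigenvector or a real invariant plane). This step needs care only to confirm that $T_o^\top$ annihilates $\calW$ and that $T_o^\top T_o=I$, so that the quotient identification is exact.
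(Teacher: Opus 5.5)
Your proof is correct and matches the argument the paper intends when it says Theorem~\ref{thm:stability_criterion} follows ``through similar arguments.'' It is a rate-tracking version of Proposition~\ref{prop:preliminary_observation}, combined with the per-$\calW_d$ flow factorization of Propositions~\ref{prop:w_d} and~\ref{prop:condition_on_Wd}; your variation-of-constants estimate is the right replacement for the cited triangular-system result, which gives only some decay rate. The necessity of 1) is not really an obstacle: \eqref{eqn:observable_error} holds for every initial condition, so your choice $\tilde x(s)=T_o\xi_o(s)$ together with $\|T_o^\top\|_2\le 1$ settles it immediately.
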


\begin{remark}
    Let $T_{iu,d}$ be a matrix whose columns form an orthonormal basis for $\unobs_i \cap \calR_d$. Then, using the parametrization $\sfx_{i,d}\coloneqq T_{iu,d}^\top \sfx_{i}$, the dynamics of the coupling-only system \eqref{eqn:coupling_only} on $\calW_d$ can be equivalently represented as
\begin{equation}
\label{eqn:non_restricted}
    \dot\sfx_{i,d} = \gamma_i T_{iu,d}^\top \sum_{j\in\calN} a^{ij}_{\sigma(t)} (T_{ju,d}\sfx_{j,d} - T_{iu,d} \sfx_{i,d}),\quad \forall i\in\calN.
\end{equation}
We note that dynamics of the form \eqref{eqn:coupling_only} commonly appear in the context of distributed linear equation solvers, e.g., \cite{liu2016acc}, while
those of the form \eqref{eqn:non_restricted} commonly appear in the context of distributed observers, e.g., \cite{kim2020TAC,wang_split,liu_continuous}.\hfill$\square$
\end{remark}

\subsection{Stability of the Coupling-Only System}
\label{sec:stability_of_the_coupling_only_system}

We now investigate when the stability of the coupling-only system \eqref{eqn:coupling_only_stacked} can be guaranteed.
We first define some notions regarding observability.

\begin{definition}
\label{def:joint_obsv}
    The eigenvalue $\lambda_d$ (or the eigenvalue pair $\lambda_d\pm i\mu_d$) of $A$ associated with $\calR_d$ is said to be \emph{jointly observable} if $\unobs_1\cap\cdots\cap\unobs_N\cap \calR_d=\{0\}$.
    System \eqref{eqn:system} is said to be \emph{jointly observable} (resp. \emph{jointly detectable}) if all eigenvalues (resp. all eigenvalues with nonnegative real part) are jointly observable.\hfill$\square$
\end{definition}

One can show that the joint observability/detectability of system \eqref{eqn:system} based on Definition~\ref{def:joint_obsv} is equivalent to the observability/detectability of the pair $(\col(C_1,\dots,C_N),A)$, respectively, which are more commonly adopted definitions.

In fact, joint observability of an eigenvalue (or a pair) is necessary for the stability of the coupling-only system on the corresponding invariant subspace.

\begin{proposition}
\label{prop:necessity_of_joint_observability}
    In order for the coupling-only system~\eqref{eqn:coupling_only_stacked} to be exponentially stable on $\calW_d$, it is necessary that $\lambda_d$ (or $\lambda_d\pm i\mu_d$) is jointly observable.\hfill$\square$
\end{proposition}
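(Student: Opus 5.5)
We prove the contrapositive. Suppose $\lambda_d$ (or $\lambda_d\pm i\mu_d$) is not jointly observable. By Definition~\ref{def:joint_obsv}, there is then a nonzero vector $v\in\unobs_1\cap\cdots\cap\unobs_N\cap\calR_d$. We will exhibit an initial condition in $\calW_d$ whose trajectory under the coupling-only system \eqref{eqn:coupling_only_stacked} stays constant and nonzero for every switching signal. This rules out exponential stability on $\calW_d$.

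The natural candidate is the consensus vector $\sfx(s)\coloneqq 1_N\otimes v$, that is, $\sfx_i(s)=v$ for all $i\in\calN$. Since $v\in\unobs_i\cap\calR_d$ for every $i$, this vector lies in $\calW_d$. For every $p\in\calP$, the Laplacian satisfies $\calL_p 1_N=0$ because its row sums vanish. Hence
\begin{equation*}
(\calL_p\otimes I_n)(1_N\otimes v)=(\calL_p1_N)\otimes v=0,
\end{equation*}
and therefore $-\Gamma M(\calL_p\otimes I_n)(1_N\otimes v)=0$. Componentwise, every summand $a^{ij}_{\sigma(t)}(\sfx_j-\sfx_i)$ in \eqref{eqn:coupling_only} vanishes whenever all $\sfx_i$ are equal.

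It follows that the constant function $\sfx(t)\equiv 1_N\otimes v$ solves \eqref{eqn:coupling_only_stacked} for any switching signal $\sigma$. By uniqueness of solutions of this piecewise-constant linear system, it is the trajectory from $\sfx(s)=1_N\otimes v$. Then $\|\sfx(t)\|_2=\sqrt N\|v\|_2>0$ for all $t\ge s$. But \eqref{eqn:exponential_stability} would force $p(t-s)e^{-r(t-s)}\sqrt N\|v\|_2\to0$ as $t\to\infty$, a contradiction. So the system is not exponentially stable on $\calW_d$, which proves necessity.

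No step is a serious obstacle. The only point needing care is that the candidate must lie in $\calW_d$, so that it is an admissible initial condition for stability on $\calW_d$; this is exactly why $v$ is taken in the intersection with $\calR_d$ and not merely in $\bigcap_i\unobs_i$.
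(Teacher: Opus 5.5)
Your proposal is correct and follows the paper's proof: both take the consensus initial condition $\sfx_1=\cdots=\sfx_N=v$ with $v\in(\unobs_1\cap\cdots\cap\unobs_N\cap\calR_d)\setminus\{0\}$, and note that it lies in $\calW_d$ and remains constant under \eqref{eqn:coupling_only} for every switching signal. You spell out a few details the paper leaves implicit (vanishing Laplacian row sums, uniqueness of solutions, and the explicit contradiction with \eqref{eqn:exponential_stability}), but the argument is the same.
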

\textit{Proof:} Suppose that $\lambda_d$ (or $\lambda_d \pm i\mu_d$) is not jointly observable.
Set $\sfx_1=\cdots = \sfx_N \in (\unobs_1\cap\cdots \cap \unobs_N \cap \calR_d) \setminus \{0\}$. Then, it can be seen from \eqref{eqn:coupling_only} that all $\sfx_i\neq 0$ remain constant.\hfill$\blacksquare$

To guarantee the stability of the coupling-only system, we additionally need a condition regarding network connectivity.
One is the uniform joint strong connectedness defined below. We denote the sequence of switching instants by $0=t_1,t_2,\dots$, i.e., $\sigma(t)$ remains unchanged on each $[t_k,t_{k+1})$.

\begin{definition}
\label{def:joint_connectedness}
    A switching network represented by $\calG_{\sigma(t)}$ is \emph{uniformly jointly strongly connected} if there exists a dwell-time $\Delta>0$ such that $t_{k+1}-t_{k}\geq\Delta$ for each $k\in\bbN$ and there exist a constant $T>0$ and a subsequence $0=t_{k_1}, t_{k_2},\dots$ of switching instants such that, for each $l\in\bbN$, $t_{k_{l+1}}-t_{k_l} \leq T$ and the graph $\big(\calN,\bigcup_{k'=k_l}^{k_{l+1}-1} \calE_{\sigma(t_{k'})}\big)$ is strongly connected.~$\square$
\end{definition}

Thanks to \cite{liu2016acc}, exponential stability of \eqref{eqn:coupling_only_stacked} on $\calW_d$ can be guaranteed under the uniform joint strong connectedness.

\begin{proposition}
\label{prop:ujsc}
    If the eigenvalue $\lambda_d$ (or the eigenvalue pair $\lambda_d\pm i\mu_d$) of $A$ associated with $\calR_d$ is jointly observable, and the communication network is uniformly jointly strongly connected, then the coupling-only system \eqref{eqn:coupling_only_stacked} is exponentially stable on $\calW_d$.\hfill$\square$
\end{proposition}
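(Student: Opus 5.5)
We will reduce the claim to the result of \cite{liu2016acc} on distributed linear-equation solvers over switching networks. The coupling-only system \eqref{eqn:coupling_only} has exactly the form studied there: each agent updates $\dot\sfx_i = \gamma_i M_i\sum_j a^{ij}_{\sigma(t)}(\sfx_j-\sfx_i)$, where $M_i$ is the orthogonal projection onto a subspace. Under uniform joint strong connectedness, that result gives exponential convergence of all $\sfx_i$ to a common point $\sfx^\ast$ with $\sfx_i-\sfx^\ast\in\im(M_i)$. Equivalently, on the set of initial conditions where this common point is forced to be zero, the system is exponentially stable.

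So the plan is to restrict attention to $\calW_d$ and identify the relevant ``intersection'' subspace. First, by statement~2) of Proposition~\ref{prop:w_d}, $\calW_d$ is invariant under \eqref{eqn:coupling_only_stacked}. Next, define $\calV_d\coloneqq\unobs_i\cap\calR_d$ for each $i$. The coupling matrix $M_i$ restricted to $\calR_d$ is the orthogonal projection onto $\calV_d$; this uses the decomposition $\calR_d=(\unobs_i\cap\calR_d)\oplus(\unobs_i^\perp\cap\calR_d)$ from the proof of Proposition~\ref{prop:w_d}. Hence, in the parametrization \eqref{eqn:non_restricted}, the dynamics on $\calW_d$ is a coupling system whose local subspaces are $\unobs_i\cap\calR_d$. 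By joint observability, their intersection $\unobs_1\cap\cdots\cap\unobs_N\cap\calR_d$ is trivial.

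The key step is to show that the limit point is zero and that convergence is uniform. Following \cite[Sec.~4]{liu2016acc}, we will argue as follows.
\begin{enumerate}
\item The quantity $\|\sfx\|_{2,\infty}$ is non-increasing. This was already used in the proof of Proposition~\ref{prop:condition_on_Wd}.
\item Over each interval $[t_{k_l},t_{k_{l+1}})$, whose length is at most $T$ and which contains dwell-time segments of length at least $\Delta$, this norm contracts by a uniform factor $\rho<1$. That is, $\|\Phi_M(t_{k_{l+1}},t_{k_l})|_{\calW_d}\|_{2,\infty}\le\rho$.
\end{enumerate}

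For step 2 we will use a compactness argument. The set of possible switching patterns on one window is compact: finitely many graphs, durations in $[\Delta,T]$, and a bounded number of switches. Hence it suffices to show that no nonzero $\sfx\in\calW_d$ keeps its $\|\cdot\|_{2,\infty}$ norm on such a window. Suppose equality held. Tracking the agents that attain the maximum, together with the union graph being strongly connected, would force all $\sfx_i$ to be equal to a common vector. That vector lies in each $\unobs_i\cap\calR_d$, and is therefore zero by joint observability.

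Chaining the windows gives $\|\sfx(t)\|_{2,\infty}\le\rho^{\lfloor (t-s)/T\rfloor-1}\|\sfx(s)\|_{2,\infty}$. Norm equivalence then yields \eqref{eqn:exponential_stability} with decay rate $r=-\ln\rho/T$.

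The main obstacle is step 2, the strict contraction. The maximizing-agent argument must handle directed graphs, and the windows may not begin at $s$. We will deal with the second issue by discarding at most one partial window, which costs only a constant factor. For the first issue, we expect to cite \cite[Lemma in Sec.~4]{liu2016acc} directly, since its hypothesis is exactly trivial intersection of the $\im(M_i)$. Here that intersection is taken within $\calR_d$, which we have ensured.
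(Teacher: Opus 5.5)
\textbf{Gap in step~2.} The uniform per-window contraction you claim is false for directed graphs. A single window whose union graph is strongly connected need not contract $\|\cdot\|_{2,\infty}$ on $\calW_d$. So no $\rho<1$ exists, and the maximizing-agent argument cannot ``force all $\sfx_i$ to be equal.''

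Here is a counterexample. Take $N=3$ and a window made of three dwell segments. In them, only the edge $(1,3)$ is active, then only $(2,1)$, then only $(3,2)$. The union $1\to3\to2\to1$ is strongly connected. Start from $\sfx=(v,v,w)$ with $0\neq v\in\unobs_1\cap\unobs_2\cap\calR_d$, $\|v\|_2=1\ge\|w\|_2$, and $w\in\unobs_3\cap\calR_d$. Joint observability only excludes a nonzero vector common to all three subspaces, so this is allowed; for instance $A=O$, $C_1=C_2=[\,0\ \ 1\,]$, $C_3=[\,1\ \ 0\,]$.

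Agent~1 has no in-neighbor in the first and third segments. In the second segment it listens to agent~2, which has been frozen at $v=\sfx_1$. Hence $\sfx_1\equiv v$, and $\|\sfx\|_{2,\infty}=1$ at both ends of the window, even though $\sfx_3\neq\sfx_1$ throughout.

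What the maximizer argument actually gives is weaker. Suppose the maximal norm $c$ is preserved over the window.
\begin{itemize}
\item[(a)] An agent whose norm drops below $c$ never returns to $c$. By variation of constants, $\sfx_i$ is driven toward convex combinations of the $M_i\sfx_j$, whose norms are at most $c$. So some agent $i^\ast$ stays at $c$ throughout.
\item[(b)] An agent staying at $c$ on a segment forces each active in-neighbor to coincide with it, and is itself frozen.
\end{itemize}
Combining (a) and (b) backward in time shows only this: agents that reach $i^\ast$ through \emph{time-respecting} paths inside the window share the value $\sfx_{i^\ast}$. Strong connectivity of the union graph does not supply such paths from every agent, as the example shows.

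\textbf{How to repair it.} Run your compactness argument over blocks of $N-1$ consecutive windows instead of a single window. Each window's union graph is strongly connected, and a path may wait at a node. So, going backward, the set of agents with a time-respecting path to $i^\ast$ grows by at least one per window, and after $N-1$ windows it is all of $\calN$. Then $\sfx_{i^\ast}\in\unobs_1\cap\cdots\cap\unobs_N\cap\calR_d=\{0\}$, contradicting $c>0$.

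The set of switching patterns over a block is still compact: there are at most $(N-1)T/\Delta$ segments, and the durations lie in a compact set. You therefore obtain $\rho<1$ over horizons of length at most $(N-1)T$. Your chaining and norm-equivalence steps then go through with $T$ replaced by $(N-1)T$.

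\textbf{Comparison with the paper.} The paper's proof avoids this issue entirely. It cites \cite{liu2016acc} for convergence of all $\sfx_i$ to one point of $\unobs_1\cap\cdots\cap\unobs_N$. It then uses $M_i$-invariance of $\calR_d$ to place that point in $\calR_d$, so the point must be $0$. Finally it invokes uniform convergence plus linearity \cite[Thm.~4.11]{khalil} for exponential stability. This is essentially the reduction sketched in your first paragraph.
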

\textit{Proof:} Let $\sfx\coloneqq \col(\sfx_1,\dots,\sfx_N)$ be the solution of \eqref{eqn:coupling_only_stacked} with initial condition $\sfx(s) \in \calW_d$.
On the one hand, it is shown in \cite[Sec.~IV-B]{liu2016acc} that all $\sfx_i$ asymptotically converge to a single point in $\unobs_1 \cap \cdots \cap \unobs_N$.\footnote{It can be seen that Definition~\ref{def:joint_connectedness} implies the repeated joint strong connectedness defined in \cite{liu2016acc}. The analysis in \cite[Sec.~Iv-B]{liu2016acc} remains valid in the presence of the edge weights $a^{ij}_{\sigma(t)}$ and the coupling gains $\gamma_i$.} On the other hand, since $\calR_d$ is $M_i$-invariant (see the proof of Proposition~\ref{prop:w_d}) and $\sfx_i(s)\in\calR_d$, each $\sfx_i$ remains in $\calR_d$. Therefore, we can conclude that each $\sfx_i$ asymptotically converges to a single point in $\unobs_1\cap \cdots \cap \unobs_N \cap \calR_d$, which must be $0$. 
Finally, given that system \eqref{eqn:coupling_only_stacked} is linear and the convergence is uniform, we can conclude that \eqref{eqn:coupling_only_stacked} is exponentially stable on $\calW_d$ \cite[Thm.~4.11]{khalil}.\hfill$\blacksquare$

In view of Theorems~\ref{thm:only_stability} and \ref{thm:stability_criterion}, it is sometimes necessary to guarantee a certain decay rate for the coupling-only system.
However, Proposition~\ref{prop:ujsc} guarantees only the stability and provides no information about the decay rate.
When the communication network is fixed, once stability is guaranteed, any prescribed decay rate can be achieved by increasing the coupling gains $\gamma_i$ over a certain threshold \cite{kim2020TAC}.
Unfortunately, without increasing the switching speed, this does not generally hold for uniformly jointly strongly connected networks.
We thus introduce a stronger connectivity condition.

\begin{definition}
\label{def:recurrent_connectedness}
    A switching network represented by $\calG_{\sigma(t)}$ is \emph{uniformly recurrently strongly connected} if there exist constants $T,\Delta>0$ and a subsequence $0=t_{k_1}, t_{k_2},\dots$ of switching instants such that, for each $l\in\bbN$, $t_{k_{l+1}}\!-t_{k_l} \leq T$, and that, for each $l\in\bbN$, there exists $k'\in\{k_{l},k_{l}+1,\dots,k_{l+1}-1\}$ for which the graph $\calG_{\sigma(t_{k'})}$ is strongly connected and $t_{k'+1}\!-t_{k'}\geq \Delta$.\hfill$\square$
\end{definition}

The assumption of uniform recurrent strong connectedness would be plausible in many practical settings where communication networks can be expected to become strongly connected at least for brief periods.
Under this assumption, a prescribed decay rate can be achieved by simply increasing the coupling gains over a threshold.

\begin{proposition}
\label{prop:ursc}
    If the eigenvalue $\lambda_d$ (or the eigenvalue pair $\lambda_d\pm i\mu_d$) of $A$ associated with $\calR_d$ is jointly observable, and the communication network is uniformly recurrently strongly connected, then for any $r>0$, there exists $\gamma^\ast>0$ such that the coupling-only system \eqref{eqn:coupling_only_stacked} is exponentially stable on $\calW_d$ with guaranteed decay rate~$r$ whenever $\gamma_i\geq \gamma^\ast$.\hfill$\square$
\end{proposition}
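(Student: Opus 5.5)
The plan is to combine two facts already available in the paper: the non-expansiveness of the coupling-only system \eqref{eqn:coupling_only_stacked}, and the fixed-network result of \cite{kim2020TAC}, which says that large coupling gains give any prescribed decay rate. Throughout I will measure the state in the norm $\|\sfx\|_{2,\infty}$. As used in the proof of Proposition~\ref{prop:condition_on_Wd}, \cite[Sec.~4-A]{liu2016acc} shows that \eqref{eqn:coupling_only_stacked} is non-expansive in this norm for every mode $p\in\calP$ and every choice of gains $\gamma_i>0$. Thus $\|\Phi_M(t,s)\|_{2,\infty}\le 1$ for all $t\ge s$, and over arbitrary, possibly disconnected, intervals nothing grows. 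Decay therefore only has to be produced on the recurrent strongly connected intervals guaranteed by Definition~\ref{def:recurrent_connectedness}. Each such interval has length at least $\Delta$, and one occurs in every window $[t_{k_l},t_{k_{l+1}})$ of length at most $T$.

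\textbf{Step 1 (fixed strongly connected mode).} Fix $p\in\calP$ with $\calG_p$ strongly connected, and consider $F_p\coloneqq -\Gamma M(\calL_p\otimes I_n)$ restricted to the invariant subspace $\calW_d$ (invariance holds by statement~2) of Proposition~\ref{prop:w_d}). Joint observability of $\lambda_d$ gives $\unobs_1\cap\cdots\cap\unobs_N\cap\calR_d=\{0\}$. With strong connectivity of $\calG_p$, this is exactly the situation in which \cite{kim2020TAC} shows that the restricted coupling matrix is Hurwitz and that its decay rate can be made arbitrarily large by taking all $\gamma_i\ge\gamma^\ast$. The target is the following estimate: for every $\epsilon>0$ there is $\gamma^\ast_p$ such that
\[
\|e^{F_p\tau}|_{\calW_d}\|_{2,\infty}\le\epsilon \quad \text{for all } \tau\ge\Delta \text{ whenever } \gamma_i\ge\gamma^\ast_p \text{ for all } i.
\]
Since $\calP$ is finite, I take $\gamma^\ast\coloneqq\max_p\gamma^\ast_p$ over the strongly connected modes.

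\textbf{Step 2 (gluing).} On each window $[t_{k_l},t_{k_{l+1}})$, factor the transition matrix into the pieces between switching instants. Every piece has $\|\cdot\|_{2,\infty}$-norm at most $1$, and the distinguished strongly connected piece has norm at most $\epsilon$. Since $\calW_d$ is invariant under every mode, all the restrictions compose. Hence $\|\Phi_M(t_{k_{l+1}},t_{k_l})|_{\calW_d}\|_{2,\infty}\le\epsilon$. Choosing $\epsilon\le e^{-rT}$ and using $t_{k_{l+1}}-t_{k_l}\le T$, iteration gives $\|\Phi_M(t,s)|_{\calW_d}\|_{2,\infty}\le e^{rT}e^{-r(t-s)}$. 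The extra factor $e^{rT}$ absorbs the partial windows at the two ends, again by non-expansiveness. Norm equivalence then converts this into the $2$-norm bound required by the definition of exponential stability with guaranteed decay rate $r$.

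\textbf{Main obstacle.} The delicate point is Step 1, specifically uniformity of the overshoot constant in the gains. A bound of the form $\|e^{F_p\tau}|_{\calW_d}\|\le c\,e^{-r'\tau}$ helps only if $c$ does not grow with the $\gamma_i$; moreover, the gains are merely bounded below and may be heterogeneous and arbitrarily large. My first attempt will be to reuse the Lyapunov construction of \cite{kim2020TAC}, a quadratic form whose weights come from the left Perron vector of $\calL_p$ and from the subspaces $\unobs_i\cap\calR_d$ but not from $\gamma$. I will write $\Gamma=\diag(\gamma_iI_n)$ and check that the derivative bound scales like $-\min_i\gamma_i$ times a $\gamma$-independent positive constant on $\calW_d$. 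This would give $c$ equal to a condition number independent of $\gamma$ and $r'\propto\gamma^\ast$, so $c\,e^{-r'\Delta}\le\epsilon$ for $\gamma^\ast$ large. If the heterogeneous $\gamma_i$ break the $\gamma$-independence of the weights, the fallback is to show directly that the $\|\cdot\|_{2,\infty}$ contraction over $[0,\Delta]$ improves monotonically as any $\gamma_i$ increases, via a singular-perturbation/compactness argument on the projected dynamics.
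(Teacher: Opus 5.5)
Your proposal is correct and is essentially the paper's proof. It uses a contraction by $e^{-rT}$ in $\|\cdot\|_{2,\infty}$ on $\calW_d$ over one dwell interval $\Delta$ for each strongly connected mode, obtained from \cite[Thm.~1]{kim2020TAC} with $A=O$; it takes $\gamma^\ast$ as the maximum over these finitely many modes; and it glues the windows by non-expansiveness. One small slip: the end-window constant should be $e^{2rT}$ rather than $e^{rT}$, which is immaterial.

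The uniformity issue you flag is genuine, and the paper settles it only by pointing to \cite{kim2020TAC}. For heterogeneous gains your first attempt fails: $\gamma$-independent weights need not make $\dot V$ negative. The weights that do are $\theta_i/\gamma_i$ (with $\theta>0$, $\theta^\top\calL_p=0$), and their condition number grows with the gain ratios. You need neither a bounded overshoot constant nor the monotonicity fallback, though. Take $V=\sum_i(\theta_i/\gamma_i)\|\sfx_i\|_2^2$; on $\calW_d$ it satisfies $\dot V\le-\kappa\|\sfx\|_2^2$, with $\kappa>0$ independent of the gains by strong connectivity of $\calG_p$ and joint observability. Since $V(0)\le\frac{\sum_i\theta_i}{\gamma^\ast}\|\sfx(0)\|_{2,\infty}^2$, integrating over $[0,\Delta]$ gives some $t^\star\in[0,\Delta]$ with $\|\sfx(t^\star)\|_2^2\le\frac{\sum_i\theta_i}{\kappa\gamma^\ast\Delta}\|\sfx(0)\|_{2,\infty}^2$. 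Then $\|\sfx(\Delta)\|_{2,\infty}\le\|\sfx(t^\star)\|_{2,\infty}\le\|\sfx(t^\star)\|_2$ by non-expansiveness on $\calW_d$ (it fails on all of $\bbR^{nN}$, so keep the restriction). Hence the contraction factor over $\Delta$ is at most $\big(\sum_i\theta_i/(\kappa\gamma^\ast\Delta)\big)^{1/2}$, uniformly over all $\gamma_i\ge\gamma^\ast$.
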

\textit{Proof:} Let $\calP^\sfo\subseteq\calP$ be the set of indices $p$ for which $\calG_p$ is strongly connected.
Following the proof of \cite[Thm.~1]{kim2020TAC} with $A$ set to $O$, one can show that \eqref{eqn:non_restricted} with topology fixed at $\calG_p$, $p\in\calP^\sfo$, can be made exponentially stable with an arbitrary guaranteed decay rate by increasing the coupling gains. In particular, for each $p\in\calP^\sfo$, there exists $\gamma_p^\ast$ such that
\begin{equation*}
    \|e^{-\Delta\Gamma M(\calL_p\otimes I_n)}\sfx\|_{2,\infty} \leq e^{-rT} \|\sfx\|_{2,\infty},\quad \forall \sfx\in\calW_d,
\end{equation*}
whenever $\gamma_i\geq \gamma_p^\ast$ for all $i\in\calN$. (See the proof of Proposition~\ref{prop:condition_on_Wd} for the definition of the norm  $\|\cdot\|_{2,\infty}$.) Let $\gamma^\ast \coloneqq \max_{p\in\calP^\sfo}\gamma_p^\ast$. Then, since $\|\sfx\|_{2,\infty}$ is non-expansive along the dynamics of the coupling-only system \eqref{eqn:coupling_only_stacked}, we can conclude from the uniform recurrent joint connectedness that
\begin{equation*}
    \|\sfx(t_{k_{l+1}})\|_{2,\infty}\leq e^{-rT}\|\sfx(t_{k_{l}})\|_{2,\infty}
\end{equation*}
for each $l\in\bbN$ whenever $\gamma_i\geq \gamma^\ast$ for all $i \in \calN$.
The proof is completed as $t_{k_{l+1}}-t_{k_l}\leq T$ and $\|\cdot\|_{2,\infty}$ is non-expansive.~$\blacksquare$

\subsection{Off-the-shelf Sufficient Conditions for Stability}
\label{sec:off_the_shelf_sufficient_conditions}

Using the conditions presented in Section~\ref{sec:stability_of_the_coupling_only_system}, we now establish three sufficient conditions for the stability of the distributed observer.
These conditions are practically verifiable as they are stated in terms of observability and connectivity.
We first restrict our attention to the systems with at most polynomial growth.

\begin{theorem}
\label{thm:sufficient_condition_1}
    Suppose that Assumption~\ref{ass:commuting_coupling} holds and system \eqref{eqn:system} is stable, marginally stable, or unstable with polynomial growth. If system~\eqref{eqn:system} is jointly detectable and the communication network is uniformly jointly strongly connected, then the error dynamics \eqref{eqn:error_dynamics_stacked} is exponentially stable.\hfill$\square$
\end{theorem}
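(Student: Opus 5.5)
The plan is to reduce the claim to Theorem~\ref{thm:only_stability} and then verify its hypothesis mode by mode using Proposition~\ref{prop:ujsc}. By Theorem~\ref{thm:only_stability}, under Assumption~\ref{ass:commuting_coupling} it suffices to show the following. For every $d\in\{1,\dots,D\}$ with $\lambda_d\geq 0$, the coupling-only system \eqref{eqn:coupling_only_stacked} must be exponentially stable on $\calW_d$ with a guaranteed decay rate strictly larger than $\lambda_d$.

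The first step is to identify which indices $d$ satisfy $\lambda_d\geq 0$ under the growth hypothesis. If system \eqref{eqn:system} is stable, marginally stable, or unstable with at most polynomial growth, then $\|e^{At}\|$ is bounded by a polynomial in $t$. Since $e^{At}$ restricted to $\calR_d$ grows like $t^{k}e^{\lambda_d t}$ for some $k\geq 0$, every eigenvalue of $A$ must have $\lambda_d\leq 0$. Hence the only indices requiring attention are those with $\lambda_d=0$. For these, the required condition is simply exponential stability of \eqref{eqn:coupling_only_stacked} on $\calW_d$ with some decay rate $r>0=\lambda_d$. This is exactly what plain exponential stability provides, by the definition of exponential stability, in which $r>0$.

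The second step is to verify this for each $d$ with $\lambda_d=0$. Such a $\calR_d$ corresponds to an eigenvalue (or pair $0\pm i\mu_d$) with nonnegative real part. Joint detectability in the sense of Definition~\ref{def:joint_obsv} therefore makes it jointly observable. Together with uniform joint strong connectedness of the network, Proposition~\ref{prop:ujsc} then yields exponential stability of \eqref{eqn:coupling_only_stacked} on $\calW_d$, with some guaranteed decay rate $r_d>0$. Applying Theorem~\ref{thm:only_stability} completes the argument. The indices with $\lambda_d<0$ need no condition there; they are already covered through Proposition~\ref{prop:condition_on_Wd} via the non-expansiveness of the coupling-only flow.

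The main step to check carefully is the first one: that the growth assumption really excludes every $\lambda_d>0$, including complex pairs with positive real part. I would argue this via the real Jordan form. A single block on $\calR_d$ with $\lambda_d>0$ produces $e^{\lambda_d t}$ growth along an eigenvector, contradicting polynomial growth. Everything else is a direct combination of the cited results.
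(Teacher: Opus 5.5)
Your proposal is correct and is essentially the paper's proof, which simply combines Theorem~\ref{thm:only_stability} with Proposition~\ref{prop:ujsc}. You spell out the steps the paper leaves implicit: the growth hypothesis forces $\lambda_d\le 0$, so only modes with $\lambda_d=0$ need a condition; there any decay rate $r>0$ suffices; and joint detectability supplies the joint observability that Proposition~\ref{prop:ujsc} requires.
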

\textit{Proof:} Combine Theorem~\ref{thm:only_stability} with Proposition~\ref{prop:ujsc}.\hfill$\blacksquare$

Theorem~\ref{thm:sufficient_condition_1} extends the results of \cite{liu_continuous,zhang2024TAC,basu2024CDC} in two perspectives.
First, it applies to a broader class of systems, including all neutrally stable systems. (Recall that Assumption~\ref{ass:commuting_coupling} is always valid when system~\eqref{eqn:system} is neutrally stable.)
Second, the convergence is guaranteed over directed networks; the previous works only considered undirected networks \cite{liu_continuous,zhang2024TAC}, or imposed an additional technical assumption (that $\unobs_i^\perp \perp \unobs_j^\perp$ if $i\neq j$, or equivalently, $\unobs_j^\perp \subseteq \unobs_i$) while considering directed networks \cite{basu2024CDC}.
When compared with \cite{wang2024auto,caiazzo2025ifac}, it requires neither fast switching nor large coupling gain, provided that the requirements are verified.

For exponentially unstable systems, a certain level of decay rate is required for the coupling-only system \eqref{eqn:coupling_only_stacked}. This can be achieved by increasing the coupling gains when the network is uniformly recurrently strongly connected.

\begin{theorem}
\label{thm:sufficient_condition_2}
    Suppose that Assumption~\ref{ass:commuting_coupling} holds.
    If system~\eqref{eqn:system} is jointly detectable and the communication network is uniformly recurrently strongly connected, then there exists $\gamma^\ast>0$ such that the error dynamics \eqref{eqn:error_dynamics_stacked} is exponentially stable whenever $\gamma_i>\gamma^\ast$ for each $i\in\calN$.\hfill$\square$
\end{theorem}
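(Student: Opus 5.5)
The plan is to combine Theorem~\ref{thm:only_stability} with Proposition~\ref{prop:ursc}, applied separately to each primary component. By Theorem~\ref{thm:only_stability}, it suffices to show the following. For every $d\in\{1,\dots,D\}$ with $\lambda_d\geq 0$, the coupling-only system \eqref{eqn:coupling_only_stacked} must be exponentially stable on $\calW_d$ with a guaranteed decay rate strictly larger than $\lambda_d$. Components with $\lambda_d<0$ need no condition on the coupling gains at all.

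First, I would fix the finite index set $\calD^+\coloneqq\{d:\lambda_d\geq 0\}$. For each $d\in\calD^+$, every eigenvalue associated with $\calR_d$ has nonnegative real part. Joint detectability of system \eqref{eqn:system}, in the sense of Definition~\ref{def:joint_obsv}, therefore implies that $\lambda_d$ (or $\lambda_d\pm i\mu_d$) is jointly observable. Together with uniform recurrent strong connectedness, this puts us in the setting of Proposition~\ref{prop:ursc}. I would invoke it with the target rate $r_d\coloneqq \lambda_d+1>\lambda_d$, which gives a threshold $\gamma_d^\ast>0$. Whenever $\gamma_i\geq\gamma_d^\ast$ for all $i$, the coupling-only system is then exponentially stable on $\calW_d$ with guaranteed decay rate $r_d$.

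Next, I would set $\gamma^\ast\coloneqq\max_{d\in\calD^+}\gamma_d^\ast$, which is finite because $D$ is finite; if $\calD^+$ is empty, take any $\gamma^\ast>0$. Then $\gamma_i>\gamma^\ast$ for all $i$ gives the required decay rate on every $\calW_d$ with $\lambda_d\geq0$ simultaneously. Theorem~\ref{thm:only_stability} (which already uses Assumption~\ref{ass:commuting_coupling}) then yields exponential stability of \eqref{eqn:error_dynamics_stacked}.

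The main point to check is that the thresholds from Proposition~\ref{prop:ursc} come from a single uniform gain condition ``$\gamma_i\geq\gamma_d^\ast$ for all $i$'', and that they do not depend on the gains in a way that would prevent taking a common maximum. The proof of Proposition~\ref{prop:ursc} obtains $\gamma_d^\ast$ as a maximum over the finitely many strongly connected modes $p\in\calP^\sfo$. Each of these is a ``large enough gain'' statement, so monotonicity in $\gamma$ is inherited, and the maximum over $d$ is legitimate. I do not expect any further obstacle.
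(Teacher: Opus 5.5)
Your proposal is correct and is essentially the paper's own proof, which just says to combine Theorem~\ref{thm:only_stability} with Proposition~\ref{prop:ursc}. Your explicit steps supply what that one-line proof leaves implicit: joint detectability gives joint observability for each $d$ with $\lambda_d\geq 0$, you take $r_d=\lambda_d+1$, and you set $\gamma^\ast$ to the maximum of the finitely many thresholds $\gamma_d^\ast$.
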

\textit{Proof:} Combine Theorem~\ref{thm:only_stability} with Proposition~\ref{prop:ursc}.\hfill$\blacksquare$

Regardless of the stability of system~\eqref{eqn:system}, to achieve an arbitrary decay rate, joint observability is required.

\begin{theorem}
\label{thm:sufficient_condition_3}
    Suppose that Assumption~\ref{ass:commuting_coupling} holds.
    If system~\eqref{eqn:system} is jointly observable and the communication network is uniformly recurrently strongly connected, then for any $r>0$, there exists $\gamma^\ast>0$ such that the error dynamics \eqref{eqn:error_dynamics_stacked} is exponentially stable with guaranteed decay rate $r$ whenever $\gamma_i>\gamma^\ast$ for each $i\in\calN$ and the eigenvalues of $A_{io}-L_{io}C_{io}$ have real parts less than or equal to $-r$ for each $i\in\calN$.\hfill$\square$
\end{theorem}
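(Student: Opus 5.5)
The plan is to combine Theorem~\ref{thm:stability_criterion} with Proposition~\ref{prop:ursc}, mirroring how Theorems~\ref{thm:sufficient_condition_1} and~\ref{thm:sufficient_condition_2} are obtained. Fix $r>0$. Hypothesis~1) of Theorem~\ref{thm:stability_criterion} is assumed directly in the statement: the eigenvalues of each $A_{io}-L_{io}C_{io}$ have real parts at most $-r$. It therefore remains to verify hypothesis~2). For every $d\in\{1,\dots,D\}$ with $\lambda_d>-r$, the coupling-only system \eqref{eqn:coupling_only_stacked} must be exponentially stable on $\calW_d$ with guaranteed decay rate $r+\lambda_d$.

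For each such $d$, joint observability of system~\eqref{eqn:system} (Definition~\ref{def:joint_obsv}) means that the eigenvalue $\lambda_d$ (or the pair $\lambda_d\pm i\mu_d$) associated with $\calR_d$ is jointly observable. The network is uniformly recurrently strongly connected. Proposition~\ref{prop:ursc}, applied with the target rate $r_d\coloneqq r+\lambda_d>0$, then gives a threshold $\gamma_d^\ast>0$. Whenever $\gamma_i\geq\gamma_d^\ast$ for all $i$, the coupling-only system is exponentially stable on $\calW_d$ with guaranteed decay rate $r_d$.

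Set $\gamma^\ast\coloneqq\max\{\gamma_d^\ast : \lambda_d>-r\}$. This maximum is over a finite set since $D$ is finite, and if the set is empty any $\gamma^\ast>0$ works. If $\gamma_i>\gamma^\ast$ for all $i$, the conditions hold simultaneously for all relevant $d$. Theorem~\ref{thm:stability_criterion} then yields exponential stability of \eqref{eqn:error_dynamics_stacked} with guaranteed decay rate $r$.

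The only point needing care is that Proposition~\ref{prop:ursc} requires a positive target rate. This is exactly why we restrict to $d$ with $\lambda_d>-r$, so that $r+\lambda_d>0$. Joint observability, rather than mere detectability, is needed because modes with $-r<\lambda_d<0$ also require a positive coupling decay rate, and by Proposition~\ref{prop:necessity_of_joint_observability} this is impossible without joint observability of those modes. Beyond this, the argument is bookkeeping.
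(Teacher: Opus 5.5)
Your proposal is correct and matches the paper's proof, which consists of combining Theorem~\ref{thm:stability_criterion} with Proposition~\ref{prop:ursc}. You additionally spell out the bookkeeping the paper leaves implicit: for each $d$ with $\lambda_d>-r$, you apply Proposition~\ref{prop:ursc} with target rate $r+\lambda_d>0$ and take the maximum of the finitely many resulting thresholds.
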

\textit{Proof:} Combine Theorem~\ref{thm:stability_criterion} with Proposition~\ref{prop:ursc}.\hfill$\blacksquare$

Compared with \cite[Thm.~2]{wang_split}, Theorems~\ref{thm:sufficient_condition_2} and \ref{thm:sufficient_condition_3} do not require the communication network to be strongly connected at all times.
To the best of the authors' knowledge, the stability of the distributed observer \eqref{eqn:distributed_observer} under uniform recurrent strong connectedness has not yet been studied.
We partially tackled this problem under Assumption~\ref{ass:commuting_coupling}.

\begin{remark}
    In practice, increasing the coupling gains $\gamma_i$ may amplify communication noise or lead to instability after digital discretization.
    In fact, for a fixed graph, the coupling gain required by Theorems~\ref{thm:sufficient_condition_1}--\ref{thm:sufficient_condition_3} is always no greater than that required by \cite[Thm.~1]{kim2020TAC}.
    The key difference is that the latter depends on $\|A_{iu}\|_2$, whereas the former depends on the maximum real part of the eigenvalues of $A_{iu}$, which is always no greater than $\|A_{iu}\|_2$.
    This suggests that applying a coordinate transformation that ensures Assumption~\ref{ass:commuting_coupling}, when possible, could help reduce coupling gains.\hfill$\square$
    
\end{remark}

\section{Illustrative Example}
\label{sec:illustrative_example}

In this section, we illustrate through an example how the established conditions guarantee convergence of the distributed observer \eqref{eqn:distributed_observer} in cases that have not been covered by previous results.
Motivated by \cite[Sec.~4.4]{wang2024auto}, consider a platoon of three vehicles whose dynamics is given by
\begin{equation*}
    \dot x = \underbrace{\left( I_3 \otimes \begin{bmatrix}
        0 & 1 \\ 0 & 0
    \end{bmatrix} \right)}_{=A} x,\vspace{-2mm}
\end{equation*}
where $x \coloneqq \col(p_1,v_1,p_2,v_2,p_3,v_3)\in\bbR^6$ with $p_i$ and $v_i$ being the position and velocity of vehicle $i$, respectively.
We assume that vehicle~1 can measure its absolute position (i.e., $y_1=p_1$), while vehicles~2 and~3 can only measure their relative positions from the preceding vehicle (i.e., $y_2=p_2-p_1$ and $y_3=p_3-p_2$).
The goal of each vehicle is to estimate the state of the whole platoon using its measurement while communicating with other vehicles.
The considered switching communication network is shown in Fig.~\ref{fig:graph}.

\setlength\abovecaptionskip{0mm}

\begin{figure}
    \centering
    \vspace{1.3mm}
    \includegraphics[width=1\linewidth]{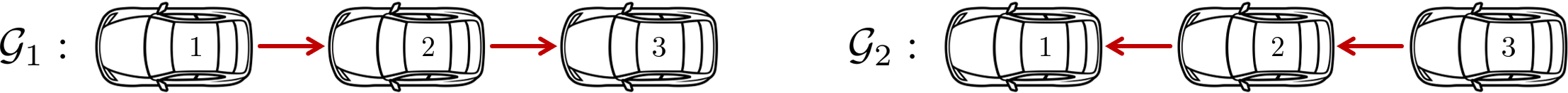}
    \caption{The considered communication network. Two digraphs switch every 0.5 seconds. All edge weights are 1.}
    \label{fig:graph}
\end{figure}

\setlength\abovecaptionskip{0mm}

\begin{figure}
    \centering
    \includegraphics[width=1\linewidth]{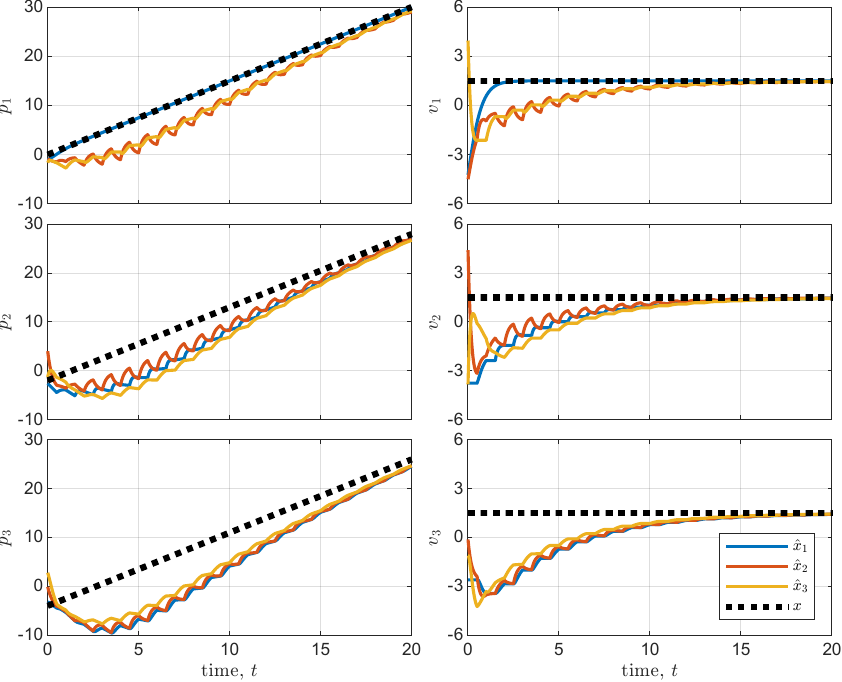}
    \caption{Results of a numerical simulation with $\gamma_i=5$. The black dotted lines depict the true state, and the colored solid lines depict the state estimate made by each vehicle.}
    \label{fig:sim_figure}
\end{figure}

Because the system is unstable and the communication network is not always strongly connected, it falls outside the scope of the existing results.
However, it can be seen that all the requirements of Theorem~\ref{thm:sufficient_condition_1} are satisfied.
To verify Assumption~\ref{ass:commuting_coupling}, let $\{e_1,\dots,e_6\}$ denote the standard basis for $\bbR^6$. Then, one can check that
$\unobs_1^\perp = \spn\{e_1,e_2\}$, $\unobs_2^\perp = \spn\{e_3-e_1,e_4-e_2\}$, and $\unobs_3^\perp = \spn\{e_5-e_3,e_6-e_4\}$ are all $A$-invariant. (In fact, Assumption~\ref{ass:commuting_coupling} is always valid whenever each $y_i$ is a function of $p_1$, $p_2$, and $p_3$.)
Therefore, the distributed observer \eqref{eqn:distributed_observer} is exponentially stable for any choice of coupling gains $\gamma_i$.
This conclusion is also supported by a numerical simulation.
It can be seen in Fig.~\ref{fig:sim_figure} that the state estimate of each vehicle converges to the true state.

We also demonstrate through a numerical simulation that uniform joint strong connectedness is insufficient for achieving an arbitrary decay rate.
Fig.~\ref{fig:sim_figure_high_gain} shows the results of a numerical simulation where we set $\gamma_i=500$ and the eigenvalues of $A_{io}-L_{io}C_{io}$ to have real parts less than $-50$.
Here, the observed decay rate is not particularly fast compared to that in Fig.~\ref{fig:sim_figure}.
To achieve an arbitrary decay rate, a stronger connectivity condition, such as uniform recurrent strong connectedness, is needed.

\begin{figure}
    \centering
    \vspace{1.2mm}
    \includegraphics[width=1\linewidth]{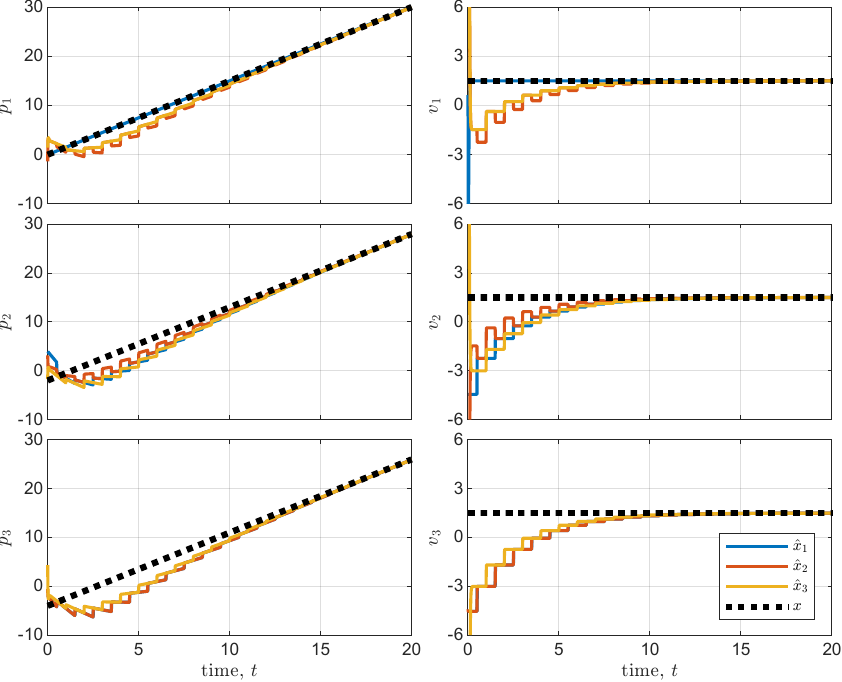}
    \caption{Results of a numerical simulation with $\gamma_i=500$. The lines represent the same quantities as in Fig.~\ref{fig:sim_figure}.}
    \label{fig:sim_figure_high_gain}
\end{figure}

\begin{remark}
\label{rmk:discarding}
    Suppose now that vehicle 2 can additionally measure its velocity $v_2$ so that $y_2 = \col(p_2-p_1, v_2)$.
    Then, it can be seen that
    $\unobs_2^\perp = \spn\{e_3-e_1,e_2,e_4\}$,
    which is no longer $A$-invariant.
    Hence, by Proposition~\ref{prop:equivalent_conditions}, Assumption~\ref{ass:commuting_coupling} is violated, and in fact, the distributed observer \eqref{eqn:distributed_observer} becomes unstable over certain switching networks.
    Nevertheless, an exponentially stable distributed observer can still be constructed; one can simply discard the measurement $v_2$ and then construct a distributed observer.
    This remedy, however, comes at the cost of losing information and may therefore degrade estimation performance, especially in the presence of measurement noise.
    Developing a method that effectively exploits all available measurements while ensuring stability over switching networks is left as a future endeavor.\hfill$\square$
\end{remark}

\section{Conclusion}
\label{sec:conclusion}

Under the commutative coupling structure, $AM_i=M_iA$, we establish a necessary and sufficient condition for exponential stability of the distributed observer \eqref{eqn:distributed_observer}.
We also provide several sufficient conditions in terms of observability and connectivity, expanding the class of systems for which the distributed observer can be deployed over unreliable networks.
Extending this result to general linear systems and nonlinear systems can be a future direction for research.

\section*{Appendix}

Denote $\sfA\coloneqq I_N\otimes A$ and $\sfM(t) \coloneqq -\Gamma M(\calL_{\sigma(t)}\otimes I_n)$.
Let $z(t) \coloneqq \Phi_A(t,s)\Phi_M(t,s)z_0$ for an arbitrary $z_0\in\bbR^{nN}$.
Clearly, $z(s) = z_0$. 
Meanwhile, noting that $M(t)$ commutes with $\Phi_A(t,s)=e^{\sfA(t-s)}$, differentiating $z(t)$ yields
\begin{align*}
    \dot z(t) = \sfA \Phi_A(t,s)\Phi_M(t,s)z_0 + \Phi_A(t,s) \sfM(t) \Phi_M(t,s) z_0\\
    = (\sfA + \sfM(t)) \Phi_A(t,s)\Phi_M(t,s) z_0 = (\sfA+\sfM(t)) z(t).
\end{align*}
Hence, we have $z(t)=\Phi_{A+M}(t,s)z_0$.
Since $z_0$ is arbitrary, we can conclude that $\Phi_{A+M}(t,s)=\Phi_A(t,s)\Phi_M(t,s)$.

\bibliographystyle{IEEEtran}
\bibliography{Ref}
\end{document}